\newcommand{\tabincell}[2]{\begin{tabular}{@{}#1@{}}#2\end{tabular}}
\begin{document}
\title{Online Maximum $k$-Interval Coverage Problem\thanks{An extended abstract of this paper is to appear in COCOA 2020.}}

%
\author{Songhua Li\inst{1}\and Minming Li\inst{1} \and  Lingjie Duan\inst{2} \and Victor C.S. Lee\inst{1}}
\authorrunning{S. Li et al.}
%
\institute{City University of Hong Kong, Hong Kong SAR, China \\
\email{songhuali3-c@my.cityu.edu.hk\\ \{minming.li,csvlee\}@cityu.edu.hk}
\and
Singapore University of Technology and Design, Singapore
\\
\email{lingjie\_duan@sutd.edu.sg}}
\maketitle              %
\begin{abstract}
We study the online maximum coverage problem on a line, in which, given an online sequence of sub-intervals (which may intersect among each other) of a target large interval and an integer $k$, we aim to select at most $k$ of the sub-intervals such that the total covered length of the target interval is maximized. The decision to accept or reject each sub-interval is made immediately and irrevocably (no preemption) right at the release timestamp of the sub-interval. We comprehensively study different settings of this problem regarding both the length of a released sub-interval and the total number of released sub-intervals. We first present lower bounds on the competitive ratio for the settings concerned in this paper, respectively. For the offline problem where the sequence of all the released sub-intervals is known in advance to the decision-maker, we propose a dynamic-programming-based optimal approach as the benchmark. For the online problem, we first propose a \underline{s}ingle-thresh\underline{o}ld-based deterministic \underline{a}lgorithm SOA by adding a sub-interval if the added length exceeds a certain threshold, achieving competitive ratios close to the lower bounds, respectively. Then, we extend to a \underline{d}ouble-thresh\underline{o}lds-based \underline{a}lgorithm DOA, by using the first threshold for exploration and the second threshold (larger than the first one) for exploitation. With the two thresholds solved by our proposed program, we show that DOA improves SOA in the worst-case performance. Moreover, we prove that a deterministic algorithm that accepts sub-intervals by multi non-increasing thresholds cannot outperform even SOA.
\keywords{Maximum $k$-Coverage Problem\and Budgeted Maximum Coverage Problem\and Interval Coverage \and Online Algorithm}
\end{abstract}
\section{Introduction}
In the classical \textsc{Maximum} $k$-\textsc{Coverage} \textsc{Problem}, we are given a universal set of elements $\textsf{U}=\{U_1,\cdots,U_m\}$ in which each is associated with a weight $w: \textsf{U}\rightarrow \mathbb{R}$, a collection of subsets $\textsf{S}=\{S_1,\cdots,S_n\}$ of \textsf{U} and an integer $k$, and we aim to select $k$ sets from \textsf{S} that maximize the total weight of covered elements in \textsf{U}.
Hochbaum et al. \cite{b35}  showed that this problem is NP-hard and presented a ($1-\frac{1}{e}$)-approximation algorithm that greedily selects a set that maximally increases the current overall coverage. The \textsc{Budgeted} \textsc{Maximum} \textsc{Coverage} (BMC) problem generalizes the classical coverage problem above by further associating each $S_i\in\textsf{S}$ with a cost $c: \textsf{S}\rightarrow \mathbb{R}$ and relaxing the budget $k$ from an integer to a real number, in which the goal is replaced by selecting a sub-collection of the sets in \textsf{S} that maximizes the total weight of the covered elements in \textsf{U} while adhering to the budget $k$. Clearly, the BMC problem is also NP-hard and actually has a $(1-\frac{1}{e})$-approximation algorithm \cite{b37}. In the online version of the above maximum coverage problems, where at each timestamp $i$ a set $S_i\in \textsf{S}$ is released together with its elements and associated values, an algorithm must decide whether to accept or reject each set $S_i$ at its release timestamp $i$ and may also drop previously accepted sets (\textit{preemption}). However, each rejected or dropped set cannot be retrieved at a later timestamp. 

In this paper, we consider the online maximum $k$-coverage problem on a line without preemption. Given an online sequence of sub-intervals of a target interval, we aim to accept $k$ of the sub-intervals irrevocably such that the total covered length of the target interval is maximized.  We refer to this variant as the \textsc{Online} \textsc{Maximum} $k$-\textsc{Interval} \textsc{Coverage} \textsc{Problem} as formally defined in Section \ref{prelinimarysection}. Regarding the length of a sub-interval, we generally consider the Unit-Length (UL), the Flexible-Length (FL), and the Arbitrary-Length (AL) settings, respectively. We consider the Unique-Number (UN) and the Arbitrary-Number (AN) settings, respectively, regarding the total number of released sub-intervals. In particular, our problem under the UN setting is essentially the classical maximum $k$-coverage problem (or say, the BMC with unit-cost sets only and an integer budget $k$) without preemption, by the following \textit{reduction} method: we partition the target interval of our problem into discrete small intervals by the boundary points of all the released sub-intervals, then, the small intervals are equivalent to the elements of a universal set \textsf{U} in which each element has a weight equal to the length of its corresponding small interval, and the released sub-intervals are equivalent to the sets in the collection $\textsf{S}=\{S_1, \cdots, S_n\}$.  The objective remains the same. 

\textbf{Related Works}. We survey relevant researches along two threads. \textit{The first thread} is about the Online Budgeted Maximum Coverage (OBMC) problem, Saha et al. \cite{b33} presented a $4$-competitive deterministic algorithm for the setting where sets have unit costs. Rawitz and Rosén \cite{b32} showed that the competitive ratio of any deterministic online algorithm for the OBMC problem must depend on the maximum ratio $r$ between the cost of a set and the total budget, and also presented a lower bound of $\Omega (\frac{1}{\sqrt{1-r}})$ and a $\frac{4}{1-r}$-competitive deterministic algorithm. Ausiello et al. \cite{b36} studied a special variant of online maximum $k$-coverage problem, the maximum $k$-vertex coverage problem, where each element belongs to exactly two sets and the intersection of any two sets has size at most one. They presented a deterministic 2-competitive algorithm and gave a lower bound of $\frac{3}{2}$.  The \textit{second thread} is about the online $k$-secretary problem \cite{b25,b26,b4}, which was introduced by Kleinberg \cite{b23} and aimed to select $k$ out of $n$ independent values for maximizing the expected sum of individual secretary values. Bateni et al. \cite{b31} studied a more general version called the submodular secretary problem, which aims to maximize the expectation of a submodular function that defines the efficiency of selected candidates based on their overlapping skills. Our problem is similar to theirs as the objective function of our problem is also submodular (see $Len(\cdot)$ of our model in Section 2). However, we focus on the adversarial release order of sub-intervals (secretaries) in the worst-case analysis of deterministic algorithms while \cite{b31} focused on a random release order of secretaries in the average-case analysis of algorithms. Other works related to this paper include the interval scheduling problem, the set cover problem, and the online knapsack problem. Interested readers may refer to \cite{b12,b13,b40,b41,b28,b29}.\\
\begin{table}[htbp]
\begin{center}
\begin{tabular}{c|c|c|c}
\hline\hline
\multicolumn{2}{c|}{\textbf{Settings}} &\textbf{Lower bounds}& \textbf{Upper bounds}\\
\hline
\multirow{2}*{UL} &UN & 
\tabincell{c}{$\sqrt{2}$ for $k=2$\\ 
decrease as $k\geq 3$ increase\\(Theorem \ref{unit-lengthLB1})}&\tabincell{c}{\bm{$<2$}\\
(\textbf{Theorems \ref{FixHitchingratio} \& \ref{doublehitchingcompetitveratio}})}\\
\cline{2-4}
 &AN &
\tabincell{c}{$\sqrt{2}$ for $k=2$\\ 
decrease as $k\geq 3$ increase\\(Corollary \ref{lb_ulan})} &
\tabincell{c}{$\frac{\sqrt{9 k^2-14k+9}-k-1}{2(k-1)}+1$\\ (Corollary \ref{ub_ulan})} \\
 \hline
\multirow{2}*{FL} &UN&
\tabincell{c}{$\frac{2km}{2km+(1-m)  \min\{k,n-k\}}$($<2$) \\(Theorem \ref{varied-lengthLB})} &
\tabincell{c}{\bm{$<1+\frac{k}{k-1}\sqrt{\frac{1+8m}{4}}$}\\(\textbf{Theorem \ref{Varied_FixHitchingratio}})} \\
\cline{2-4}
&AN &\tabincell{c}{$\frac{2m}{m+1}$\\(Corollary \ref{lb_flan})} &\tabincell{c}{ $\frac{\sqrt{(1+8m)  k^2-(6+8m)k+9}-k-1}{2(k-1)}+1$\\(Corollary \ref{ub_flan})} \\
 \hline
AL& UN or AN & $+\infty$ (Theorem \ref{arbitrary-lengthLB}) & - \\
 \hline
US &UN&
\tabincell{c}{$\sqrt{2}$ for $k=2$\\ 
decrease as $k\geq 3$ increase\\(Corollary \ref{unit-sumLB1})}
 &\tabincell{c}{\bm{$<2$}\\(\textbf{Theorem \ref{unit-sumUB1}})}\\
\hline\hline
\end{tabular}
\label{tab1}
\end{center}
\caption{Main results in this Paper}
\end{table}

\textbf{Our contribution}.  Results of this paper are three-fold. \textit{First}, we show that no online deterministic algorithm can achieve a bounded competitive ratio in the AL setting, and present lower bounds on the competitive ratio for the other settings, respectively, in a constructive way. \textit{Second}, we give an $O(kn+n\log n)$-time optimal solution to the offline problem where the sequence of all the released sub-intervals is known in advance to the decision-maker, by applying a dynamic programming-based approach. \textit{Third}, for the online problems, we propose two $O(n)$-time deterministic algorithms, SOA and DOA, with their competitive ratios proved to be close to the lower bounds in the settings, respectively. We also extend our results in UL to a generalized unit-sum (US) setting, where at each timestamp, a batch of a finite number of disjoint sub-intervals is released instead and accordingly one can accept at most $k$ released batches. In addition, we show that any deterministic algorithm, that accepts sub-intervals by non-decreasing thresholds, cannot achieve better performance even than the SOA does. 

Main results of this paper are summarized in Table \ref{tab1}, in which, for ease of understanding, some complicated parameter-dependent results are approximated by formulations in bold. For precise results, please refer to the corresponding theorems or corollaries. 
\section{Preliminaries} \label{prelinimarysection}
\begin{table}[htpb]\label{notatinosinthispaper}
\begin{center}
\begin{tabular}{cp{200pt}}
\hline\hline
 Notations & Descriptions\\
 \hline\hline
 $[0,a]$& The target interval;\\
\hline 
$k$& The maximum number of sub-intervals to accept;\\
\hline
$V_i=[o_i,d_i]$&  The $i$th released sub-interval;\\
\hline
$\mathbb{V}_i=\{V_1,V_2,\cdots,V_i\}$ & The sequence of the first $i$ released sub-intervals;\\
\hline
$\chi(\mathbb{V}_n,k)$ & The optimal solution for the offline problem, given both the set $\mathbb{V}$ of offline sub-intervals and the quota $k$ beforehand;\\
\hline
$\Lambda (V_i,V_j)$ & The length of the intersection between sub-intervals $V_i$ and $V_j$, i.e., $\Lambda (V_i,V_j)=|V_i\cap V_j|$;\\
\hline
$\Phi (\mathbb{V}_i)$ & The subset of $\mathbb{V}_i$ that are 
accepted by our algorithm; \\
\hline
$Len(U)$ &  The cumulative length of the parts of $[0,a]$ that are covered by sub-intervals in a given set $U$, i.e., $Len(U)=|\bigcup_{V_i\in U}V_i|$. Also, we use $Len(V_i)$ to denote the length of a sub-interval $V_i$, i.e., $Len(V_i)=|V_i|$. \\
\hline\hline
\end{tabular}
\end{center}
\caption{Notations in this paper.}\label{tab2}
\end{table}
\textbf{The Model}. Table \ref{notatinosinthispaper} summarizes key notations in this paper.  An online sequence $\mathbb{V}=\{V_1,V_2,\cdots\}$  of sub-intervals of a large target interval $[0,a]$  are released in an adversarial order to the decision-maker, in which  $V_i=[o_i,d_i]\subseteq[0,a]$ for each $V_i\in \mathbb{V}$. Upon the arrival of each $V_i\in \mathbb{V}$, the decision-maker must make a decision whether to accept or reject $V_i$ immediately and irrevocably.\textit{ For example, when recruiting at most $k$ employees across different domains of expertise in the target interval, each released sub-interval represents a candidate's expertise domain. The hiring decision on each sub-interval is irrevocable and must be made on candidate arrival without knowing future sub-intervals.} Due to the quota limitation, the decision-maker can accept no more than $k$ ($\geq2$) sub-intervals \footnote{When $k=1$, our problem degenerates to the classical secretary problem without expertise sub-interval overlap.}. Any two different sub-intervals $V_i,V_j\in \mathbb{V}$  may intersect (i.e., $[o_i,d_j]\cap [o_j,d_j]\neq \varnothing$) considering that the expertise of candidates may overlap in reality. Now, we formally define the settings studied in this paper: with respect to the length $(d_i-o_i)$ of each $V_i\in \mathbb{V}$, we consider three settings.
\begin{itemize}
    \item  \textbf{Unit Length\;(UL)}: $|d_i-o_i|=1$ is normalized with regard to $a$;
    \item \textbf{Flexible Length\;(FL)}: $|d_i-o_i|$ varies in a known range $[1,m]$, in which $m>1$ as $m=1$ degenerates the case to the UL setting;
    \item \textbf{Arbitrary Length\;(AL)}:$|d_i-o_i|$ varies arbitrarily in $[0,a]$;
\end{itemize}
In addition, we also consider a generalized version of the UL setting, which is the \textbf{Unit Sum\;(US)} setting: each $V_i\in \mathbb{V}$ is no longer restricted to contain only one sub-interval, but a batch of a finite number of disjoint sub-intervals of $[0,a]$ whose sum length is equal to 1. This tells that a candidate masters different domains of expertise. We keep the same unit-sum for all the sub-intervals to tell similar strength of all the job candidates. Accordingly, $k$ batches of sub-intervals can be accepted in the US setting. With respect to the number $|\mathbb{V}|$ of total released sub-intervals, we consider the following two settings respectively.
\begin{itemize}
    \item \textbf{Unique Number \;(UN)}: $|\mathbb{V}|$ is known in advance  as a constant $n\in \mathbb{N}^*$. We further restrict $n\geq k+1$ as otherwise (when $n\leq k$) an optimal solution can be easily achieved by just accepting all sub-intervals;
    \item \textbf{Arbitrary Number \;(AN)}: $|\mathbb{V}|$ is not known;
\end{itemize}
When two settings are linked by a "-", we refer to the case that the two settings hold together. For example, we use UL-UN to refer to the setting where all sub-intervals have unit length and the total number of released sub-intervals are known in advance. Whenever we specify a single setting in one dimension, we do not distinguish among settings in the other dimension. For example, when specifying the UN setting only, we actually refer to the context as any setting in \{UL-UN, FL-UN, AL-UN\}. 

Given a sequence $\mathbb{V}=\{V_1,V_2,\cdots\}$ of online sub-intervals of $[0,a]$, the \textbf{objective} is to accept a subset $U\subseteq \mathbb{V}$ of sub-intervals such that $|U|\leq k$ and the cumulative length $Len(U)$ of the parts of $[0,a]$ that are covered by accepted sub-intervals in $U$ is maximized. Denote ALG($\mathbb{V}$) and OPT($\mathbb{V}$) as the covered length by an online algorithm ALG and by an optimal offline solution with complete information of all sub-intervals known beforehand, respectively. We slightly abuse notations by rewriting ALG($\mathbb{V}$) and OPT($\mathbb{V}$) to ALG and OPT, respectively. For $\rho\geq 1$, a deterministic online algorithm ALG is called $\rho$-competitive for the problem if OPT($\mathbb{V}$)$\leq \rho $ALG($\mathbb{V}$) for every instance $\mathbb{V}$. Alternatively, we also say the competitive ratio of ALG is $\rho$ for the problem. Further, when a number $\gamma \geq 1$ ensures that $\gamma  \leq \rho$ holds for all deterministic online algorithms, we say $\gamma $ is a lower bound the on competitive ratio for the problem.
\section{Lower Bounds}
We construct lower bounds on the competitive ratio for the settings studied in this paper, respectively. 
\begin{theorem}\label{arbitrary-lengthLB}
In the AL setting, no online deterministic algorithm can achieve a bounded competitive ratio.
\end{theorem}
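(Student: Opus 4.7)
The plan is an adaptive adversary argument: for any target competitive ratio $\rho$, I would exhibit an instance on which every deterministic online algorithm $\mathrm{ALG}$ is forced into coverage at least $\rho$ times smaller than the offline optimum. The guiding intuition is that $\mathrm{ALG}$ must commit irrevocably to sub-intervals with no information about future lengths; in the AL setting the adversary can always produce a future sub-interval so much longer than everything previously accepted that it dominates the offline optimum.

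Concretely, I would fix a growth factor $M$ with $M>\rho+1$ and plan to release up to $k+1$ pairwise disjoint sub-intervals $V_1,\ldots,V_{k+1}$ placed end to end inside $[0,a]$, with $|V_i|=M^i$; the target length is taken to be $a=\sum_{i=1}^{k+1}M^i$ so that the whole construction fits. The adversary would stop releasing as soon as one of the following occurs: (i) $\mathrm{ALG}$ rejects some $V_t$ with $t\le k$, or (ii) all $k+1$ sub-intervals have been presented, in which case the quota constraint has forced $\mathrm{ALG}$ to reject $V_{k+1}$. One of the two always eventually happens, because after $k$ acceptances $\mathrm{ALG}$ cannot accept any further interval.

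The case analysis then splits cleanly. In case (i), the offline optimum may simply take $V_t$, so $\mathrm{OPT}\ge M^t$, whereas $\mathrm{ALG}$ covers at most $\sum_{i<t}M^i<M^t/(M-1)$, giving $\mathrm{OPT}/\mathrm{ALG}>M-1>\rho$. In case (ii), $\mathrm{ALG}$ has accepted exactly $V_1,\ldots,V_k$ and so covers $\sum_{i=1}^k M^i$, while the offline optimum takes the $k$ largest sub-intervals $V_2,\ldots,V_{k+1}$, covering $M\sum_{i=1}^k M^i$; hence $\mathrm{OPT}/\mathrm{ALG}=M>\rho$. Letting $\rho\to\infty$ then shows that no bounded competitive ratio is achievable in the AL setting.

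I do not anticipate a real obstacle here; the only mild bookkeeping point is that the target length $a$ must accommodate all the released sub-intervals, but since $a$ is chosen as part of the instance it can be scaled with $\rho$ and $k$ as needed (or equivalently every length may be rescaled by $\sum_i M^i$ to keep $a$ fixed). The crucial design choice is the geometric growth: it guarantees that in either branch the last relevant interval dominates the total length of all earlier ones by at least a factor of $M-1$, so the entire argument is controlled by the single parameter $M$, which the adversary is free to pick arbitrarily large.
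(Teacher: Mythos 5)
Your proof is correct and takes essentially the same approach as the paper: an adaptive adversary releases sub-intervals with geometrically escalating lengths, so that the algorithm either rejects an interval dominating everything it has accepted so far (and is then denied anything better), or exhausts its quota on small intervals just before a dominating one arrives. The only cosmetic difference is that you use disjoint intervals with growth factor $M\to\infty$, whereas the paper uses nested intervals $[0,\varepsilon^{k+1-i}]$ with $\varepsilon\to 0$; both give an unbounded ratio.
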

\begin{proof}
Let $\varepsilon$ be a small positive number, i.e., $0<\varepsilon<<1$. Suppose the first $k$ sub-intervals released as $\mathbb{V}_k=\{[0,\varepsilon^{k+1-i}]|i=1,2,\cdots,k\}$. We discuss two cases. 
\\
\textbf{Case 1.} Online algorithm (ALG) rejects some sub-interval $V_j=[0,\varepsilon^{k+1-j}]\in \mathbb{V}_k$. Afterwards, the adversary only release sub-intervals as $[0,\varepsilon^{k+1-j+1}]$ instead. This way, the optimal solution (OPT) is able to achieve an overall length at least $\varepsilon^{k+1-j}$ by accepting $V_j$, while ALG can achieve an overall length at most $\varepsilon^{k+1-j+1}$ by sub-intervals in $\mathbb{V}_{j-1}$, we have $\rho\leq \frac{\varepsilon^{k+1-j}}{\varepsilon^{k+1-j+1}}=\frac{1}{\varepsilon}\rightarrow +\infty $ when $\varepsilon\rightarrow 0$;
\\
\textbf{Case 2.} ALG accepts all the $k$ sub-intervals in $\mathbb{V}_k$ and hence runs out of its quota.  Afterward, the adversary only release sub-intervals as $[0,1]$. Then, OPT is able to achieve an overall length 1 by accepting some $[0,1]$, while ALG achieves an overall length exactly equal to $\varepsilon$ by $\mathbb{V}_k$, we have $\rho\leq \frac{1}{\varepsilon}\rightarrow +\infty $ when $\varepsilon\rightarrow 0$.
\end{proof}
\begin{theorem}\label{unit-lengthLB1}
In the UL-UN setting, no online deterministic algorithm can achieve a competitive ratio better than (\ref{ULUN_LB}), in which $\alpha=\left \lfloor 1-\frac{\log (k^{\frac{1}{k}}-1)}{\log (k^{\frac{1}{k}})} \right \rfloor$
\begin{equation}\label{ULUN_LB}
\begin{small}
\begin{cases}
\sqrt{2},&{\rm if\;} k=2\\
\min\{k^{\frac{1}{k}},\frac{k^{\frac{\alpha }{k}}+k-\alpha-1}{k^{\frac{\alpha }{k}}+k-\alpha-2},\frac{k}{k^{\frac{\alpha }{k}}+k-\alpha-1}\},& {\rm if\;} 3\leq k\leq n-\alpha-1\\
\min\{k^{\frac{1}{k}},\frac{k^{\frac{\alpha }{k}}+k-\alpha-1}{k^{\frac{\alpha }{k}}+k-\alpha-2},\frac{n-\alpha+2+k^{\frac{\alpha}{k}}-k^{\frac{n-k}{k}}}{k^{\frac{\alpha }{k}}+k-\alpha-1}\},& {\rm if\;} n-\alpha\leq k\leq n-1\\
\end{cases}
\end{small}
\end{equation}
\end{theorem}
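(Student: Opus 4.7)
The plan is to establish (\ref{ULUN_LB}) through an adaptive adversary argument, where the adversary's strategy branches on the algorithm's accept/reject decisions and each branch forces the ratio $\mathrm{OPT}/\mathrm{ALG}$ to exceed one of the three terms inside the minimum. Since any deterministic online algorithm must realize one such branch, the overall competitive ratio is lower bounded by the minimum, and the three subcases in (\ref{ULUN_LB}) arise from how many total intervals $n$ the adversary can afford to release before being forced to truncate.

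For $k = 2$, a short two-stage construction suffices. Release $V_1 = [0, 1]$ followed by $V_2 = [\sqrt{2} - 1,\, \sqrt{2}]$, whose shift is tuned so that $|V_1 \cup V_2| = \sqrt{2}$. If the algorithm accepts both, it exhausts its quota with coverage $\sqrt{2}$, whereupon the adversary releases $V_3,\ldots,V_n$ disjoint from $V_1 \cup V_2$; $\mathrm{OPT}$ then covers $2$ by taking $V_1$ together with any disjoint interval, giving ratio $\sqrt{2}$. In every other branch (one or both of $V_1, V_2$ rejected), the adversary instead fills the remaining slots with copies of $V_1$ only; the algorithm is forced to end with coverage at most $1$, while $\mathrm{OPT}$ still covers $\sqrt{2}$ via $V_1$ and $V_2$. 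Each branch yields ratio at least $\sqrt{2}$.

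For the main subcase $3 \leq k \leq n - \alpha - 1$, I plan a multi-phase construction parameterized by $\beta := k^{1/k}$. The adversary releases phases indexed $i = 0, 1, \ldots, \alpha$; phase $i$ consists of a block of identical unit intervals at a shifted position $p_i$ chosen by a geometric schedule tied to $\beta^k = k$, so that the union of intervals accepted across any prefix of phases grows by a factor proportional to $\beta^{-1}$ per skipped phase. Within each phase the adversary releases enough duplicates that no within-phase waiting strategy helps the algorithm. The three terms inside the minimum then correspond to three algorithmic response patterns: committing to a single early phase (ratio $\beta = k^{1/k}$), splitting acceptances at the critical phase $\alpha$ (ratio $\frac{\beta^\alpha + k - \alpha - 1}{\beta^\alpha + k - \alpha - 2}$), and delaying further to capture more disjoint blocks (ratio $\frac{k}{\beta^\alpha + k - \alpha - 1}$). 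The parameter $\alpha = \lfloor 1 - \log(\beta - 1)/\log \beta \rfloor$ is the discrete phase index at which the second and third tradeoffs balance; the floor is what creates the three separate formulas. For the subcase $n - \alpha \leq k \leq n - 1$, the same construction is truncated because there are not enough intervals to populate all $\alpha + 1$ phases, and only the third term changes accordingly, while the first two remain unchanged since they arise from behaviour before truncation binds.

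The main obstacle is to choose the phase positions $p_i$ and block sizes simultaneously so that no algorithm strategy evades all three bounds: $\mathrm{OPT}$ must be bounded below in every branch by an offline disjoint-style selection of $k$ well-spread unit intervals, while $\mathrm{ALG}$ must be bounded above by what its committed intervals cover under the geometric overlap structure. Verifying that the tight ratios match the stated expressions reduces to algebraic identities involving $\beta^k = k$ together with the floor inequality $\beta^{-\alpha} < \beta - 1 \leq \beta^{1-\alpha}$ that is forced by the definition of $\alpha$.
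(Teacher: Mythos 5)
Your overall strategy --- an adaptive adversary releasing unit intervals at geometrically scheduled shifts governed by $\beta = k^{1/k}$, with the three terms of the minimum arising from three response patterns --- is the same one the paper uses, but as written the proposal has two genuine problems. First, the $k=2$ branch handling is wrong: in the branch where the algorithm rejects $V_1=[0,1]$ but accepts $V_2=[\sqrt{2}-1,\sqrt{2}]$, your adversary fills the remaining slots with copies of $V_1$. The algorithm still has one unit of quota, so it simply accepts a copy of $V_1$ and ends with coverage $|V_1\cup V_2|=\sqrt{2}$, matching OPT; the ratio in that branch is $1$, not $\sqrt{2}$. The fix (which the paper uses) is to punish that branch with copies of $V_2$ instead, so the algorithm is stuck at $1$ while OPT takes $V_1$ and $V_2$.

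Second, for $3\le k\le n-1$ you have described a plan rather than a proof: the phase positions $p_i$, the block sizes, and the verification that every branch yields one of the three stated ratios are all deferred (you explicitly label choosing them "the main obstacle"). The paper's construction resolves exactly this, and it does so differently from your oblivious ``blocks of duplicates'' design: it releases one interval per position, $V_j=[\sum_{i=0}^{j-1}\theta_i,\,\sum_{i=0}^{j-1}\theta_i+1]$ with $\theta_i=k^{i/k}-k^{(i-1)/k}$ for $i\le\alpha$ and $\theta_i=1$ afterwards, and reacts adaptively to the \emph{first} rejection by re-releasing the previously shown position for all remaining timestamps. That adaptivity is what removes the need for duplicate blocks (which would blow past the budget of $n$ intervals) and what makes the three ratios computable in closed form: rejecting $V_j$ with $j\le\alpha$ gives $k^{j/k}/k^{(j-1)/k}=k^{1/k}$; rejecting $V_j$ with $\alpha+1\le j\le k$ gives $\frac{k^{\alpha/k}+j-\alpha-1}{k^{\alpha/k}+j-\alpha-2}$, minimized at $j=k$; and accepting all of $V_1,\dots,V_k$ leaves the algorithm with $k^{\alpha/k}+k-\alpha-1$ while OPT collects the disjoint tail, which is worth $k$ when $n\ge k+\alpha+1$ and $n-\alpha+2+k^{\alpha/k}-k^{(n-k)/k}$ otherwise --- this last computation is where the case split at $n-\alpha$ in the theorem actually comes from, and it is absent from your sketch. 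Until you commit to explicit positions and carry out these verifications, the $k\ge3$ part of the claim is not established.
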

\begin{proof}
Given the number $k$ of quota (i.e., the maximum number of sub-intervals to accept), the number $n$ ($\geq k+1$) of released sub-intervals of the target interval $[0,a]$ with the right endpoint $a$ chosen as a large number, we prove this theorem for $k=2$ and $3\leq k\leq n-1$, respectively. Note that sub-intervals in $\mathbb{V}=\{V_1,\cdots,V_n\}$ arrive in increasing order of their subscripts.
\\
\textbf{Case 1. $k=2$}. \quad We prove the lower bound by the following constructed instance. Considering that $V_1=[0,1]$ and  $V_2=[\sqrt{2}-1,\sqrt{2}]$, we discuss it in the following three cases.
\\
\textbf{Case 1.1.} ALG accepts both $V_1$ and $V_2$. Then, the future ($n-2$) sub-intervals are released as $\{V_j=[1,2]| V_j\in\{V_3,...,V_n\}\}$. This way, $\rho= \frac{2}{\sqrt{2}}=\sqrt{2}$ as OPT can accept $V_1$ and $V_3$.
\\
\textbf{Case 1.2.} ALG accepts $V_2$ and rejects $V_1$. Then, the future ($n-2$) sub-intervals arrive as  $\{V_j=[\sqrt{2}-1,\sqrt{2}]| V_j\in\{V_3,...,V_n\}\}$. Hence, $\rho= \frac{\sqrt{2}}{1}$ as OPT accepts $V_1$ and $V_2$;
\\
\textbf{Case 1.3.} ALG accepts $V_1$ and rejects $V_2$, or ALG rejects both $V_2$ and $V_1$. Then, the future ($n-2$) sub-intervals arrive as $\{V_j=[0,1]| V_j\in\{V_3,...,V_n\}\}$. Hence, $\rho=\frac{\sqrt{2}}{1}$ as OPT accepts $V_1$ and $V_2$.
\\
\textbf{Case 2.} $3\leq k\leq n-1$. We show the lower bound by the following constructive policy: (1) the $n$ sub-intervals in $\{V_1,V_2,V_3,...,V_{n}\}$ are initially supposed to arrive in increasing order of their subscripts (see Figure \ref{LBunite}); (2) if ALG rejects $V_1$, sub-intervals in
$\{V_2,V_3,...,V_n\}$ are replaced by
another ($n-1$) new sub-intervals $\{V'_2,V'_3,...,V'_{n}\}$ with the same range from 1 to 2, i.e., $V'_i=[1,2]$
for each $V'_i\in\{V'_2,V'_3,...,V'_n\}$; (3) if ALG rejects some sub-interval
$V_j\in\{V_2,V_3,...,V_{k}\}$, all the future sub-intervals are replaced by another ($n-j$) new ones 
$\{V'_{j+1},...,V'_{n}\}$ that have the same range as $V_{j-1}$. 
\begin{align*}
    &V_1=[\theta_0,\theta_0+1]\\
&V_2=[\theta_0+\theta_1,\theta_0+\theta_1+1]\\
&\cdots\\
&V_j=[\sum_{i=0}^{j-1}\theta_i,\sum_{i=0}^{j-1}\theta_i+1]\\
&\cdots\\
&V_{n}=[\sum_{i=0}^{n-1}\theta_i,\sum_{i=0}^{n-1}\theta_i+1]
\end{align*}
in which, 
\begin{equation}
\label{theta}
\theta_i=
\begin{cases}
0,&i=0\\
k^{\frac{i}{k}}-k^{\frac{i-1}{k}}, & 1\leq i\leq \alpha\\
1,& \alpha+1\leq i\leq n-1
\end{cases}
\end{equation}
in which $\alpha=\left \lfloor 1-\frac{k  \log (k^{\frac{1}{k}}-1)}{\log k} \right \rfloor$ with $3\leq \alpha\leq k$ for $3\leq k$, and $r_1<r_2<\cdots<r_k$. We let the right-side endpoint $a$ of the target interval $[0,a]$ be larger than $ 1+\sum_{i=0}^{n-1}\theta_i$, which guarantees the above $V_1,\cdots,V_n$ are all sub-intervals of the target interval $[0,a]$. 
By Equation (\ref{theta}), we have
\begin{equation}\label{headtotail}
\sum_{m=0}^{i}\theta_m=1+\sum_{m=0}^{i-1}\theta_m, \forall i\in \{\alpha+1,...,n-1\}
\end{equation} 
i.e., the end point $d_i$ ($=1+\sum_{m=0}^{i-1}\theta_m$) of  each sub-interval $V_i\in\{V_{\alpha+1},...,V_{n-1}\}$ is just the start point $o_{i+1}$ ($=\sum_{m=0}^{i}\theta_m$) of the next sub-interval $V_{i+1}$ to be released. Under the above policy, We show the lower bound by the following three cases. 
\begin{figure}
    \centering
    \includegraphics[width=8cm]{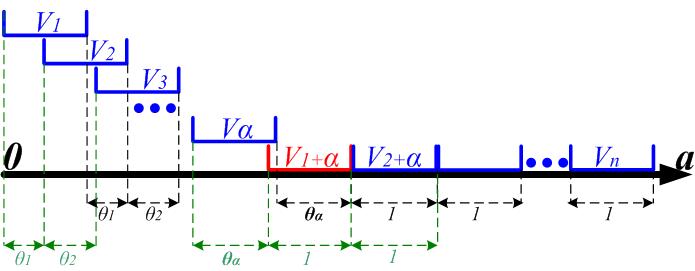}
    \caption{Configuration of the sub-intervals in the lower bound of the unit-length case.}
    \label{LBunite}
\end{figure}
\\
\textbf{Case 2.1.} ALG rejects $V_1$. ALG can achieve an overall covered length at most 1 from future sub-intervals in $\{V'_2,...,V'_n\}$, implying $\rho=2$ as OPT can get an overall length of 2 by accepting both $V_1$ and another $k-1$ sub-intervals in $\{V'_2,...,V'_n\}$. 
\\
\textbf{Case 2.2.} ALG accepts $\{V_1,V_2,...,V_{j-1}\}$ and rejects $V_j$ ($\in\{V_2,...,V_{k}\}$). As all the future sub-intervals have the same range as $V_{j-1}$, ALG can get an overall length of $1+\sum_{i=0}^{j-2}\theta_i$ by the first ($j-1$) sub-intervals $\{V_1,...,V_{j-1}\}$. In contrast, OPT can get an overall length of $1+\sum_{i=0}^{j-1}\theta_i$ by accepting all the first $j$ sub-intervals in $\{V_1,...,V_{j}\}$, implying the ratio $\rho=\frac{1+\theta_1+\cdots+\theta_{j-2}+\theta_{j-1}}{1+\theta_1+\cdots+\theta_{j-2}}$.  We further discuss the ratio in two sub-cases.
\\
\textbf{Case 2.2.1.} $j\leq \alpha$ ($=\left \lfloor 1-\frac{k  \log (k^{\frac{1}{k}}-1)}{\log k} \right \rfloor$). By Equation (\ref{theta}),
\begin{center}
$\rho=\frac{1+(k^{\frac{1}{k}}-1)+\cdots+(k^{\frac{j-1}{k}}-k^{\frac{j-2}{k}})+(k^{\frac{j}{k}}-k^{\frac{j-1}{k}})}{1+(k^{\frac{1}{k}}-1)+\cdots+(k^{\frac{j-1}{k}}-k^{\frac{j-2}{k}})}=\frac{k^{\frac{j}{k}}}{k^{\frac{j-1}{k}}}=k^{\frac{1}{k}}<1.5<2$.\end{center}
\textbf{Case 2.2.2.} $\alpha+1\leq j\leq k$. By Equation (\ref{theta}), we have
\begin{center}
$\rho=\frac{1+(k^{\frac{1}{k}}-1)+\cdots+(k^{\frac{\alpha}{k}}-k^{\frac{\alpha-1}{k}})+j-\alpha-1}{1+(k^{\frac{1}{k}}-1)+\cdots+(k^{\frac{\alpha}{k}}-k^{\frac{\alpha-1}{k}})+j-\alpha-2}=\frac{k^{\frac{\alpha}{k}}+j-\alpha-1}{k^{\frac{\alpha}{k}}+j-\alpha-2}\geq \frac{k^{\frac{\alpha}{k}}+k-\alpha-1}{k^{\frac{\alpha}{k}}+k-\alpha-2}$\end{center}
in which the inequality holds by the basic condition of this case.
\\
\textbf{Case 2.3.} ALG accepts the first $k$ sub-intervals $\{V_1,V_2,V_3,...,V_{k}\}$. This implies ALG gets an overall length of ($k^{\frac{\alpha}{k}}+k-\alpha-1$). Later, the future sub-intervals are released as  $\{V_{k+1},...,V_{n}\}$. We then discuss two cases.
\\
\textbf{Case 2.3.1.} $\alpha+k+1\leq n$. By  Equation (\ref{headtotail}) and $n-k+1\geq \alpha+2$, we know OPT can get an overall length of $k$ by the last $k$ sub-intervals released, i.e.,  $\{V_{n-k+1},...,V_n\}$. This implies the ratio $\rho=\frac{k}{k^{\frac{\alpha}{k}}+k-\alpha-1}$.
\\
\textbf{Case 2.3.2.} $1+k\leq n \leq \alpha+k$. By Equation (\ref{theta}), we have $\theta_0<\cdots<\theta_{\alpha+1}=\cdots=\theta_n=1$. As OPT performs no worse than accepting the last $k$ sub-intervals $\{V_{n-k+1},...,V_{n}\}$, we know OPT can get an overall length no less than
\begin{equation*}
\begin{split}
&Len(\bigcup_{i=n-k+1}^{n}V_i)\\
&=Len(\bigcup_{i=\alpha+2}^{n}V_i)+Len(\bigcup_{i=n-k+1}^{\alpha+1}V_i)\\
&=Len(\bigcup_{i=\alpha+2}^{n}V_i)+Len(\bigcup_{i=1}^{\alpha+1}V_i)-\sum^{n-k}_{i=1}\theta_i\\
&=(n-\alpha+1)+k^{\frac{\alpha}{k}}-\sum_{i=1}^{n-k}\theta_i\\
&=n-\alpha+2+k^{\frac{\alpha}{k}}-k^{\frac{n-k}{k}}
\end{split}
\end{equation*}
in which the first equation holds by Equation (\ref{headtotail}) and the second equation holds by Equation (\ref{theta}) (see Figure \ref{LBunite}). This further implies the ratio \begin{center}
$\rho\geq
\frac{n-\alpha+2+k^{\frac{\alpha}{k}}-k^{\frac{n-k}{k}}}{k^{\frac{\alpha}{k}}+k-\alpha-1}
=\frac{k^{\frac{\alpha}{k}}+k-\alpha-1+(n-k+3-k^{\frac{n-k}{k}})}{k^{\frac{\alpha}{k}}+k-\alpha-1}$.\end{center}

Therefore, no online algorithm can beat a competitive ratio
\begin{center}
$\min\{k^{\frac{1}{k}},\frac{k^{\frac{\alpha }{k}}+k-\alpha-1}{k^{\frac{\alpha }{k}}+k-\alpha-2},\frac{\beta}{k^{\frac{\alpha }{k}}+k-\alpha-1}\}$\end{center}
in which $\beta=\min\{k,n-\alpha+2+k^{\frac{\alpha}{k}}-k^{\frac{n-k}{k}}\}$.
\end{proof}
\begin{corollary}\label{lb_ulan}
For UL-AN, no online deterministic algorithm can achieve a competitive ratio better than (\ref{ULAN_LB}), in which  $\alpha=\left \lfloor 1-\frac{\log (k^{\frac{1}{k}}-1)}{\log (k^{\frac{1}{k}})} \right \rfloor$.
\begin{equation}\label{ULAN_LB}
\begin{small}
\begin{cases}
\sqrt{2},&{\rm if\;} k=2\\
\min\{k^{\frac{1}{k}},\frac{k^{\frac{\alpha }{k}}+k-\alpha-1}{k^{\frac{\alpha }{k}}+k-\alpha-2},\frac{k}{k^{\frac{\alpha }{k}}+k-\alpha-1}\},& {\rm if\;} 3\leq k\\
\end{cases}
\end{small}
\end{equation}
\end{corollary}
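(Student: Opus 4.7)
The plan is to derive Corollary \ref{lb_ulan} as a direct consequence of the construction underlying Theorem \ref{unit-lengthLB1}, using the simple observation that the UL-AN setting grants the adversary at least as much power as UL-UN does. More precisely, any deterministic online algorithm $\mathcal{A}$ for UL-AN, when fed a stream consisting of exactly $n$ sub-intervals, processes that stream without using any advance knowledge of $n$; its trajectory of accept/reject decisions is therefore also a valid trajectory of a UL-UN algorithm on the same instance. Consequently, every adversarial instance built in the proof of Theorem \ref{unit-lengthLB1} for some specific $n$ can be redeployed verbatim against $\mathcal{A}$ and yields the same ratio.

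Given this reduction, I would simply tune the adversary's $n$ to land in the middle regime of Theorem \ref{unit-lengthLB1}. For $k=2$, any $n \geq 3$ suffices to recycle Case 1 of that proof and obtain the bound $\sqrt{2}$. For $3 \leq k$, I would choose $n \geq k + \alpha + 1$ with $\alpha = \left\lfloor 1 - \log(k^{1/k}-1)/\log(k^{1/k}) \right\rfloor$, which places the instance squarely into the regime $3 \leq k \leq n - \alpha - 1$ of Theorem \ref{unit-lengthLB1} and thus yields the bound
\[
\min\!\left\{k^{1/k},\; \frac{k^{\alpha/k}+k-\alpha-1}{k^{\alpha/k}+k-\alpha-2},\; \frac{k}{k^{\alpha/k}+k-\alpha-1}\right\}.
\]
Crucially, the third branch of Formula (\ref{ULUN_LB})---which arose only when $n \in \{k+1,\dots,k+\alpha\}$---never becomes binding in UL-AN, because the adversary is never forced into that short-stream regime and can always extend the release sequence with additional copies of the last released sub-interval.

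The main obstacle, though mild, is to verify that the adversarial construction behind Theorem \ref{unit-lengthLB1} is genuinely adaptive and does not implicitly rely on the algorithm committing based on knowledge of $n$. A quick inspection confirms this: at every step $i$, the adversary's choice of $V_{i+1}$ depends only on the algorithm's past decisions on $V_1,\dots,V_i$, while $n$ enters merely as a design parameter that bounds the stream length. Thus the construction remains valid when $n$ is hidden from $\mathcal{A}$, and combining it with the choice of $n$ above yields Formula (\ref{ULAN_LB}).
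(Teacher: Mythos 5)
Your proposal is correct and matches the paper's own argument: both reuse the adversarial construction of Theorem~\ref{unit-lengthLB1} and observe that, since the adversary in UL-AN is not tied to a fixed $n$, it can always extend the stream so that sub-case~2.3.2 (the short-stream regime producing the third branch of~(\ref{ULUN_LB})) never arises, leaving OPT bounded only by the quota $k$. Your explicit check that the construction is adaptive and does not exploit the algorithm's knowledge of $n$ is a nice touch, but the route is essentially identical to the paper's.
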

\begin{proof}
Note that in UL-AN, online algorithm does not know the number ($|\mathbb{V}|$) of all the sub-intervals to be released in advance and can only learn a sub-interval upon its release timestamp. This implies that the adversary can control $|\mathbb{V}|$ to release an arbitrary number of sub-intervals. By using a similar release policy of sub-intervals as in the proof of Theorem 2 with the $n=|\mathbb{V}|$ discarded, the sub-case 2.3.2 of Theorem 2 is further removed since the total covered length achieved by OPT is bounded just by quota $k$ in this unit-length setting, while the other cases of Theorem 2 does not change significantly. Hence, we have the lower bound of UL-AN as
\begin{equation*}
\begin{cases}
\sqrt{2},&{\rm if\;} k=2\\
\min\{k^{\frac{1}{k}},\frac{k^{\frac{\alpha }{k}}+k-\alpha-1}{k^{\frac{\alpha }{k}}+k-\alpha-2},\frac{k}{k^{\frac{\alpha }{k}}+k-\alpha-1}\},& {\rm if\;} 3\leq k\\
\end{cases}\end{equation*}
in which $\alpha=\left \lfloor 1-\frac{k  \log (k^{\frac{1}{k}}-1)}{\log k} \right \rfloor$.
\end{proof}
\begin{theorem} \label{varied-lengthLB}
For FL-UN, no online deterministic algorithm can achieve a competitive ratio better than $\frac{2km}{2km+(1-m)  \min\{k,n-k\}}$ which is strictly smaller than 2.
\end{theorem}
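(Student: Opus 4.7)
My plan is to construct, for any online deterministic algorithm, an adaptive adversarial instance of $n$ sub-intervals whose ratio $\mathrm{OPT}/\mathrm{ALG}$ is at least $\rho := \frac{2km}{2km + (1-m)\min\{k, n-k\}}$. Writing $t := \min\{k, n-k\}$ for brevity, so that $\rho = \frac{2km}{2km - (m-1)t}$, the construction extends the decision-tree style adversary of Theorem~\ref{unit-lengthLB1} by exploiting the wider length range $[1, m]$ permitted under FL: instead of only length-$1$ baits, the adversary is allowed to release intermediate-length ``mediums'' of length $\frac{m+1}{2}$, which is the calibration that forces the ratio to hit $\rho$ exactly in the pivotal branch.

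Phase 1 of the adversary releases $t$ disjoint mediums at sparse positions within $[0, a]$. Phase 2 then releases the remaining $n - t$ sub-intervals adaptively, based on how many mediums the algorithm accepted. In the greedy branch (the algorithm accepts all $t$ mediums), Phase 2 consists of $n - t$ disjoint length-$m$ ``bigs'' placed far from Phase 1; the algorithm has exhausted $t$ slots on mediums and can fit at most $k - t$ disjoint bigs, achieving coverage $t \cdot \frac{m+1}{2} + (k - t) m = \frac{2km - (m-1)t}{2}$, while OPT selects $k$ disjoint bigs (feasible because $n - t \geq k$ always holds) for coverage $km$, giving ratio exactly $\rho$. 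In the patient branch (the algorithm rejects some mediums in hopes of bigs later), Phase 2 adapts to release copies of Phase 1 mediums or overlapping bigs, so the algorithm's reserved slots cannot be filled with fresh disjoint coverage; meanwhile OPT can still exploit the rejected mediums (or the copies introduced in Phase 2) for additional coverage.

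A branch-by-branch analysis then verifies that the ratio $\mathrm{OPT}/\mathrm{ALG}$ is at least $\rho$ in every path of the adversary's tree. The main technical obstacle will be the patient branch: as a function of the number $j$ of accepted mediums, the ratio must be shown to stay above $\rho$ for every feasible $j < t$, and the most delicate case is the boundary $j$ closest to $t$, where the ratio approaches but does not drop below $\rho$. The specific length $\frac{m+1}{2}$ of the mediums is precisely what synchronizes the greedy and patient branches to a common worst-case ratio of $\rho$. Finally, the strict inequality $\rho < 2$ follows from the observation that $\rho < 2$ is equivalent to $km > (m-1) \min\{k, n-k\}$, which is immediate since $\min\{k, n-k\} \leq k$ and $m - 1 < m$.
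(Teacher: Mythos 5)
Your greedy branch is correct and even calibrated to hit $\rho$ exactly, but the patient branch --- which you yourself flag as the crux --- does not work with the tools you name, and this is a genuine gap rather than a routine verification. Take $j=t-1$ accepted mediums. If the adversary releases copies of Phase-1 mediums, then ALG retains $(t-1)\frac{m+1}{2}$ while OPT gets $t\cdot\frac{m+1}{2}$, a ratio of $\frac{t}{t-1}$; the requirement $\frac{t}{t-1}\ge\rho$ is equivalent to $(m-1)t^2\le 2km$, which already fails for $t=k$, $m=2$, $k\ge 5$ (ratio $\frac{k}{k-1}$ versus $\rho=\frac{4}{3}$). If instead the adversary releases disjoint length-$m$ bigs, ALG ends with $(t-1)\frac{m+1}{2}+(k-t+1)m$, which \emph{exceeds} the greedy-branch coverage $t\frac{m+1}{2}+(k-t)m$ by $\frac{m-1}{2}>0$ while OPT is still capped at $km$, so the ratio drops strictly below $\rho$. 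Overlapping bigs fare no better (e.g.\ bigs containing the accepted mediums give a ratio below $\rho$ once $m>2+\sqrt{5}$). The root cause is structural: because your mediums have length $\frac{m+1}{2}>1$, an algorithm that takes $t-1$ of them and banks one extra slot for a big is strictly better off than the all-$t$ algorithm, so your all-or-nothing branching cannot be closed by any simple continuation; rescuing it would require an adversary that calibrates interval lengths against the specific algorithm's acceptance thresholds, which you have not supplied.

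The paper's proof avoids this entirely by a different design. Phase 1 consists of $\tau=\min\{k,n-k\}$ disjoint \emph{unit-length} intervals, and the case split is at the midpoint $x=\lceil\tau/2\rceil$ rather than at $x=\tau$. If ALG accepts $x\le\lfloor\tau/2\rfloor$ of them, the adversary releases only copies, giving ratio at least $\tau/\lfloor\tau/2\rfloor\ge 2>\rho$, so this branch needs no fine tuning at all. If ALG accepts $x\ge\lceil\tau/2\rceil$, the adversary releases disjoint length-$m$ bigs; ALG gets at most $x+(k-x)m$ while OPT gets $km$, and since $1-m<0$ the ratio $\frac{km}{km+x(1-m)}$ is minimized at $x=\lceil\tau/2\rceil$, yielding exactly $\frac{2km}{2km+(1-m)\tau}=\rho$. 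In other words, the factor of $2$ and the $\min\{k,n-k\}$ in the bound come from the halfway split over unit-length baits, not from a specially calibrated medium length; with that split both branches are immediate, and the delicate boundary case you anticipate never arises. Your closing argument that $\rho<2$ is correct and matches the paper's.
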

\begin{proof}
In FL-UN, the length of each sub-interval $V_i=[o_i,d_i]$ belongs to a known range $[1,m]$, i.e., $|d_i-o_i|\in [1,m]$. We show the lower bound by the following constructive policy: the first $\tau =\min\{k,n-k\}$ sub-intervals are released as $\{V_i=[i-1,i]| i\in\{1,2,...,\tau \}\}$. Suppose ALG accepts $x\in \mathbb{N}$ out of the $\tau$ sub-intervals.
\\
\textbf{Case 1.} $0<x\leq \left \lfloor \frac{\tau}{2} \right \rfloor$. Then, all the future sub-intervals have the same range $[0,1]$. This way, ALG can get an overall length at most $\left \lfloor \frac{\tau}{2} \right \rfloor$, implying $\rho\geq \frac{\tau}{\left \lfloor \frac{\tau }{2} \right \rfloor}\geq 2$;
\\
\textbf{Case 2.} $\left \lceil \frac{\tau }{2} \right \rceil\leq x\leq \tau$. Then, the remaining ($n-\tau$) sub-intervals arrive as 
\begin{equation*}
\{V_j=[\tau+(j-\tau -1)  m,\tau +(j-\tau )  m]|j\in\{\tau+1,\tau +2,...,n\}\}.\end{equation*}
This way, ALG can get an overall length at most $x+(k-x)  m$, which is by accepting $x$ out of the first $\tau$ sub-intervals and ($k-x$) out of the last ($n-\tau$) sub-intervals.  Since $n-\tau\geq k$, OPT is able to get an overall length  $k m$ by accepting $k$ out of $n-\tau$ sub-intervals in $\{V_{\tau +1},...,V_{\tau +k}\}$. Hence, 
\begin{small}
\begin{equation*}
\rho\geq\frac{k  m}{km+x(1-m)}\geq \frac{2km}{2km+(1-m)  \min\{k,n-k\}}.\end{equation*}
\end{small}
Therefore, no online algorithm can beat a competitive ratio of 
\begin{center}
$\min\{2,\frac{2km}{2km+(1-m)  \min\{k,n-k\}}\}=\frac{2km}{2km+(1-m)  \min\{k,n-k\}}$\end{center}
\end{proof}
\begin{corollary}\label{lb_flan}
For FL-AN, no online deterministic algorithm can achieve a competitive ratio better than $\frac{2m}{m+1}$ which is strictly smaller than 2.
\end{corollary}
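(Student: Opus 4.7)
The plan is to mirror the proof of Theorem~\ref{varied-lengthLB} and simply exploit the fact that, in the AN setting, the adversary is not constrained by a pre-committed total count $|\mathbb{V}|$ and may extend the sequence as long as desired. In the FL-UN bound the quantity $\tau = \min\{k, n-k\}$ appeared precisely because the adversary had to leave at least $k$ unused slots for the second phase while keeping the first phase nontrivial. Once $|\mathbb{V}|$ is unbounded, the adversary can unconditionally choose $\tau = k$, which will be exactly what yields the ratio $\frac{2m}{m+1}$ after plugging into the FL-UN formula.

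Concretely, I would reuse the same two-phase release policy verbatim, with $\tau$ replaced by $k$. Phase one releases the $k$ disjoint unit sub-intervals $V_i = [i-1, i]$ for $i = 1, \dots, k$, and the adversary observes how many, say $x$, the online algorithm accepts. If $0 < x \leq \lfloor k/2 \rfloor$, the adversary keeps releasing copies of $[0,1]$ indefinitely; ALG covers at most $\lfloor k/2 \rfloor$ while OPT covers $k$, giving $\rho \geq \frac{k}{\lfloor k/2 \rfloor} \geq 2 \geq \frac{2m}{m+1}$. If instead $\lceil k/2 \rceil \leq x \leq k$, the adversary releases the $k$ length-$m$ sub-intervals $V_j = [k + (j-k-1)m, k + (j-k)m]$ for $j = k+1, \dots, 2k$; this phase is permissible in AN because $|\mathbb{V}|$ can freely reach $2k$. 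Then ALG covers at most $x + (k-x)m$ and OPT covers $km$, so carrying through the same algebra as Case~2 of Theorem~\ref{varied-lengthLB} with $\tau = k$ yields
\[
\rho \geq \frac{2km}{2km + (1-m)\,k} = \frac{2m}{m+1}.
\]

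The main (and only) subtlety to verify is that the argument for dropping the $\min$ is valid: in FL-UN the adversary could not guarantee $\tau = k$ whenever $n - k < k$, but in AN the adversary is not revealing $|\mathbb{V}|$ in advance and chooses it adaptively, so the second phase can always deliver $k$ more length-$m$ sub-intervals. Taking the minimum of the two branches' bounds with $\frac{2m}{m+1} < 2$, the corollary follows. Finally, $\frac{2m}{m+1} < 2$ holds since $m > 1$, confirming the ``strictly smaller than 2'' clause.
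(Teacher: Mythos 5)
Your proposal is correct and follows essentially the same route as the paper's own proof, which likewise obtains the bound by rerunning the release policy of Theorem~\ref{varied-lengthLB} with $\tau=k$ and the constraint $n=|\mathbb{V}|$ discarded, bounding OPT by $km$ and computing $\frac{km}{km+\frac{k}{2}(1-m)}=\frac{2m}{m+1}$. Your write-up merely spells out the two adversary branches more explicitly than the paper's terse one-line reduction.
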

\begin{proof}
In FL-AN, the total number  $|\mathbb{V}|$ of sub-intervals is not limited to $n$ any more. By setting  $\tau=k$ and removing  $n=|\mathbb{V}|$ in the release policy of sub-intervals in Theorem \ref{varied-lengthLB}, 
the overall length of OPT is bounded by $km$ (in which the $k$ is the quota constraint and $m$ denotes the largest length of a released sub-interval in $\mathbb{V}$) only. Further, We get the lower bound of FL-AN as $\frac{km}{km+\frac{k}{2}(1-m)}=\frac{2m}{m+1}$.
\end{proof}
\begin{corollary}\label{unit-sumLB1}
For US-UN,  no online deterministic algorithm can achieve a competitive ratio better than (\ref{ULUN_LB}), where  $\alpha=\left \lfloor 1-\frac{\log (k^{\frac{1}{k}}-1)}{\log (k^{\frac{1}{k}})} \right \rfloor$.
\end{corollary}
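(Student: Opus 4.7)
The plan is to reduce Corollary~\ref{unit-sumLB1} directly to Theorem~\ref{unit-lengthLB1} by observing that the UL setting is a special case of the US setting: a single sub-interval of unit length is itself a valid US batch, namely a batch consisting of exactly one disjoint sub-interval whose lengths sum to $1$. Thus every adversarial instance used in the proof of Theorem~\ref{unit-lengthLB1} can be reinterpreted verbatim as a US-UN instance, and any deterministic online algorithm for US-UN inherits the same dichotomy of choices (accept/reject) on each arrival.

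Concretely, I would instantiate the same release policy as in Case~1 ($k=2$) and Case~2 ($3\le k\le n-1$) of the proof of Theorem~\ref{unit-lengthLB1}, except that each $V_i$ is now declared to be a US batch containing the single sub-interval $[\sum_{m=0}^{i-1}\theta_m,\,\sum_{m=0}^{i-1}\theta_m+1]$ (with $\theta_i$ defined by~(\ref{theta})). Since a batch with one unit-length sub-interval is a legal US input, the adversary's policy is admissible. The algorithm's total covered length $\mathrm{Len}(\Phi(\mathbb{V}_n))$ and the offline optimum $\mathrm{OPT}$ are computed by exactly the same geometric overlap analysis as in Theorem~\ref{unit-lengthLB1}, because the covered region depends only on the union of the accepted sub-intervals, not on how they are grouped into batches. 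Consequently the three cases (ALG rejects $V_1$; ALG accepts $V_1,\dots,V_{j-1}$ and rejects $V_j$; ALG accepts the first $k$ batches) yield identical ratio bounds, giving the expression~(\ref{ULUN_LB}).

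There is essentially no new obstacle: the only point that warrants a brief remark is that the US-UN quota constraint (at most $k$ batches accepted) coincides with the UL-UN quota (at most $k$ sub-intervals accepted) under this reduction, because each adversarial batch contributes exactly one sub-interval. I would therefore keep the proof very short, stating the reinterpretation of the construction, quoting the case analysis of Theorem~\ref{unit-lengthLB1}, and concluding that the lower bound~(\ref{ULUN_LB}) transfers to US-UN. If desired, one sentence could note that the argument in fact shows any lower bound proven for UL-UN automatically holds for US-UN, so no further case work (analogous to Case~2.3.2 of Theorem~\ref{unit-lengthLB1}) needs to be revisited.
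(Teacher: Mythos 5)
Your proposal is correct and matches the paper's argument: the paper likewise observes that every UL-UN instance is (trivially, or via an arbitrary partition into disjoint pieces summing to one) a valid US-UN instance, so the lower bound of Theorem~\ref{unit-lengthLB1} transfers verbatim. No substantive difference in approach.
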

\begin{proof}
By partitioning each sub-interval $V_i\in \mathbb{V}$ of the unit-length case arbitrarily into a finite number of disjoint sub sub-intervals, we can get an instance of the unit-sum case. Hence, the lower bound, which is showed in Theorem \ref{unit-lengthLB1}, applies to the unit-sum case as well. 
\end{proof}
\section{Upper bounds}
We present two online deterministic algorithms in subsections \ref{soasection} and \ref{doasection} respectively. Before that, we give an $O(kn+n\log n)$ time dynamic programming approach as a benchmark, which optimally solves the offline problem where the sequence of all the released sub-intervals are given beforehand.
\subsection{Dynamic Programming Based Optimal Offline Solution}\label{offlinesolution}
Since both the UL and the FL settings are special cases of the AL setting, we present our offline solution in the AL setting\footnote{We do not distinguish our offline solution in the other dimension since our solution performs optimally in either the UN or the AN.}. Suppose, without loss of generality, that the total number of released sub-intervals in the offline problem equals $n$.
\\
\textit{First.} We sort sub-intervals in $\mathbb{V}_n=\{V_1,V_2,\cdots,V_n\}$ in non-decreasing order of their end locations (i.e., the $d_i$ of each $V_i$), which runs in\textit{ $O(n\log n)$ time}. We abuse notations, in this offline solution only, to denote $(V_1,V_2,\cdots,V_n)$ as the sequence of sorted sub-intervals, i.e., $d_1\leq d_2\leq \cdots\leq d_n$, and further $\mathbb{V}_i=\{V_1,\cdots,V_i\}$ as the first $i$ sub-intervals in the sequence. Suppose the decision-maker accepts sub-intervals in $\mathbb{V}_n$ in decreasing order of their subscripts as well.
\begin{definition}
 $ V_{\psi(i)}=\mathop{\arg\max}\limits_{\{V_j\in \mathbb{V}_{i-1}| o_j<o_i\leq d_j\}}\{o_i-o_j\}$ indicates the sub-interval in $\mathbb{V}_{i-1}$ that intersects with $V_i$ and has the left-most start location.
\end{definition}
\begin{definition}
 $V_{\phi(i)}=\mathop{\arg\min}\limits_{\{V_j\in \mathbb{V}_{i-1}| d_j<o_i\}}\{o_i-d_j\}$ indicates the sub-interval in $\mathbb{V}_{i-1}$ that is disjoint from but is closest to $V_i$.
\end{definition}
\begin{proposition}\label{dpforarbitrarylenth}
Once an offline OPT accepts $V_i$, OPT accepts either $ V_{\psi(i)}$ or a sub-interval in $\{V_1,V_2,\cdots,V_{\phi (i)}\}$. 
\end{proposition}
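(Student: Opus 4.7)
The plan is to prove the proposition by an exchange argument: starting from an arbitrary optimal solution $\text{OPT}$ accepting $V_i$, I will exhibit an equally optimal solution $\text{OPT}'$ that accepts $V_i$ together with either $V_{\psi(i)}$ or some $V_j \in \{V_1,\ldots,V_{\phi(i)}\}$. The preliminary observation is that, since the sub-intervals have been sorted in non-decreasing order of right endpoints, every $V_j$ with $j<i$ satisfies $d_j \leq d_i$, so $V_j$ either lies strictly to the left of $V_i$ (i.e.\ $d_j<o_i$) or intersects $V_i$ (i.e.\ $d_j \geq o_i$).

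The main step is a case split on the structure of $\text{OPT} \cap \mathbb{V}_{i-1}$. If $\text{OPT}$ contains some $V_j$ with $j<i$ disjoint from $V_i$, then $d_j<o_i$ and by the definition of $\phi(i)$ as the largest index $j'$ with $d_{j'}<o_i$, we have $j \leq \phi(i)$; hence $V_j \in \{V_1,\ldots,V_{\phi(i)}\}$ and the claim is immediate. Otherwise every sub-interval of $\text{OPT}$ with subscript less than $i$ intersects $V_i$, and if $V_{\psi(i)} \in \text{OPT}$ we are done. If not, I would pick any such intersecting $V_{j^*} \in \text{OPT}$ with $j^*<i$ and perform the swap $\text{OPT}' := (\text{OPT}\setminus\{V_{j^*}\}) \cup \{V_{\psi(i)}\}$. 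Since $V_{j^*}$ intersects $V_i$, either $o_{j^*} < o_i \leq d_{j^*}$ (left overlap) or $o_{j^*} \geq o_i$ (containment), and in both regimes the definition of $\psi(i)$ yields $o_{\psi(i)} \leq o_{j^*}$ together with $d_{\psi(i)} \geq o_i$; consequently $V_{\psi(i)} \cup V_i \supseteq [o_{\psi(i)}, d_i] \supseteq V_{j^*}$. Therefore removing $V_{j^*}$ does not reduce the total covered length, and adding $V_{\psi(i)}$ can only preserve or increase it, so $\text{OPT}'$ is again optimal and now contains $V_{\psi(i)}$.

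The main obstacle is the geometric verification in the second case, namely showing that every point of $V_{j^*}$ lost upon removal is re-covered by $V_{\psi(i)} \cup V_i$; this hinges on the combined inequalities $o_{\psi(i)} \leq o_{j^*}$ and $d_{\psi(i)} \geq o_i$, both direct consequences of the definition of $\psi(i)$, and it must be checked uniformly across the two intersecting regimes (left overlap vs.\ containment). Once this is in place, the proposition immediately justifies a dynamic programming recurrence in which, after $V_i$ is selected, the next sub-interval chosen (in decreasing subscript order) is restricted to either $V_{\psi(i)}$ or an element of $\{V_1,\ldots,V_{\phi(i)}\}$, which in turn underlies the $O(kn+n\log n)$-time offline algorithm.
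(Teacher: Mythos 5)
Your proof is correct and follows essentially the same route as the paper's: split on whether the other accepted sub-interval(s) of smaller index intersect $V_i$, handle the disjoint case via the maximality of $d_{\phi(i)}$, and handle the intersecting case by observing that any intersecting $V_{j^*}$ with $j^*<i$ satisfies $V_{j^*}\subseteq [o_{\psi(i)},d_i]=V_{\psi(i)}\cup V_i$, so it can be exchanged for $V_{\psi(i)}$ without loss. Your version is in fact more explicit than the paper's (which merely asserts that the intersecting candidate "should be" $V_{\psi(i)}$), and the only points glossed over — the degenerate case where $\psi(i)$ does not exist because every intersecting $V_{j^*}$ is contained in $V_i$, and tie-breaking among equal right endpoints — are equally untreated in the paper's own argument.
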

\begin{proof}
When $V_i$ is accepted by offline OPT, the next sub-interval to accept (denoted as $V_{\leftarrow,i}$) lies in set $\mathbb{V}_{i-1}=\{V_{1},\cdots,V_{i-1}\}$ as OPT is supposed to accept sub-intervals in decreasing order of their subscripts. Obviously, $V_{\leftarrow,i}$ either intersects or does not intersect with $V_{i}$. When $V_{\leftarrow,i}$ intersects with $V_i$, we know $V_{\leftarrow,i}$ should be the sub-interval in $\mathbb{V}_{i-1}$ that contributes the most additional length to $V_i$, which is $V_{\psi(i)}$ in Proposition 1, as accepting any sub-interval in $\{V_{\psi(i)},\cdots,V_{i-1}\}$ cannot increase the overall length of OPT; As the sub-interval $V_{\phi}$ in Proposition 1 indicates the sub-interval in $\mathbb{V}_{i-1}$ that is the nearest one to $V_i$ and not intersect with $V_i$, we know $V_{\leftarrow,i}$ lies in $\{V_1,V_2,\cdots,V_{\phi (i)}\}$ when $V_{\leftarrow,i}$ does not intersect with $V_i$.
\end{proof}
\textit{Second}. Since OPT, denoted as $\chi (\mathbb{V}_n,k)$, accepts sub-intervals in $\mathbb{V}_n$ in decreasing order of their subscripts as well, we write the \textit{Bellman Equation} in our dynamic programming  as (\ref{arbitraryDP_basestep1}) and (\ref{arbitraryDP_basestep2}) by setting $i=n$ and $j=k$ initially. Specifically, we discuss the following cases when handling an arbitrary $V_i\in \mathbb{V}_i$.
\begin{enumerate}
    \item OPT rejects $V_i$. Then, we have $\chi (\mathbb{V}_i,j)=Len(\mathbb{V}_i)$ if OPT has enough quota, i.e,  $i\leq j$, to accept all sub-intervals in $\mathbb{V}_i$; or $\chi (\mathbb{V}_i,j)=\chi(\mathbb{V}_{i-1},j)$ otherwise;
    \item OPT accepts $V_i$ and hence runs out of quota ($j=0$). Then, $\chi (\mathbb{V}_i,j)=0$;
    \item OPT accepts $V_i$ and remains quota ($j\geq 1$). By Proposition \ref{dpforarbitrarylenth},
    \begin{enumerate}
        \item OPT further accepts someone in $\{V_1,V_2,\cdots,V_{\phi (i)}\}$. Since $V_i$ is disjoint from the next accepted sub-interval, $\chi (\mathbb{V}_i,j)=Len(V_i)+\chi (\mathbb{V}_{\phi (i)},j-1)$;
        \item  OPT  further accepts $V_{\psi(i)}$. To calculate $\chi (\mathbb{V}_i,j)$, we introduce an intermediate function $\kappa(\mathbb{V}_i,j)$ given in Equation (\ref{arbitraryDP_basestep2})\footnote{The major difference between $\kappa(\mathbb{V}_i,j)$ and $\chi(\mathbb{V}_i,j)$ is that $\kappa(\mathbb{V}_i,j)$ always accepts the last sub-interval $V_i$ in $\mathbb{V}_i$ while $\chi(\mathbb{V}_i,j)$ does not necessarily. }, which always accepts the last sub-interval $V_i$ in $\mathbb{V}_i$ and totally accepts $j$ out of $i$ sub-intervals in $\mathbb{V}_i$ such that the overall covered length of the interval $[0,a]$ is maximized. Then, we count the length contributed by $V_i$ as the part without intersection with $V_{\psi(i)}$, which is $Len(V_i)-\Lambda (V_i,V_{\psi(i)})$, and transit the remaining part of OPT's overall length to $\kappa (\mathbb{V}_{\psi(i)},j-1)$. This way,
    $\chi (\mathbb{V}_i,j)=Len(V_i)-\Lambda (V_i,V_{\psi(i)})+\kappa (\mathbb{V}_{\psi(i)},j-1)$. 
    \end{enumerate}
\end{enumerate}
\begin{equation}\label{arbitraryDP_basestep1}
\begin{split}
   \chi (\mathbb{V}_i,j)=\left\{\begin{matrix}
Len(\mathbb{V}_i),& i\leq j \\ 
\begin{split}
\max\{&\chi(\mathbb{V}_{i-1},j), Len(V_i)+\chi (\mathbb{V}_{\phi (i)},j-1),\\&Len(V_i)-\Lambda (V_i,V_{\psi(i)})+\kappa (\mathbb{V}_{\psi(i)},j-1)\}   
\end{split},& 1\leq j< i
\\
0, & j=0 \\
\end{matrix}\right.
\end{split}
\end{equation}
\begin{equation}\label{arbitraryDP_basestep2}
\begin{split}
    \kappa(\mathbb{V}_i,j)=\left\{\begin{matrix}
           \max\{\begin{split}
           &Len(V_i)+\chi (\mathbb{V}_{\phi(i)},j-1),\\ &Len(V_i)-\Lambda(V_i,V_{\psi(i)})+\kappa(\mathbb{V}_{\psi(i)},j-1)\end{split}\}  ,&j>1\\
            Len(V_i),& j=1 \\ 
            0, & j=0 \end{matrix}\right.
\end{split}
\end{equation}
Note that our dynamic programming solution totally generates $O(kn)$ intermediate states in which each state runs in $O(1)$ time. Together with the preliminary sorting step, our offline solution totally runs in $O(kn+n\log n)$ time.
\subsection{Single-threshold Online Algorithm}\label{soasection}
We first propose an online algorithm, named the
\underline{S}ingle-threshold based \underline{O}nline \underline{A}lgorithm (SOA), for the UN setting. Then, we extend SOA to SOA$_{\rm AN}$ to tackle the AN setting. Note that SOA and SOA$_{\rm AN}$ can achieve competitive ratios strictly smaller than 2 for the UN and the AN settings, respectively. 
\begin{algorithm}[tb] 
\caption{Single-threshold Online Algorithm (SOA)}
\label{FixHitching}
\textbf{Input}:  A sequence $\mathbb{V}=\{V_1,V_2,...,V_n\}$ of $n$ sub-intervals of the target interval $[0,a]$, in which $V_i=[o_i,d_i]$ for each $V_i\in \mathbb{V}$, the quota $k$ ($2\leq k\leq n-1$);\\
\textbf{Output}: A set of accepted sub-intervals, i.e., $\Phi (\mathbb{V}_{n})$;\\
\begin{algorithmic}[1] 
\STATE $\Phi (\mathbb{V}_1)=\{V_1\}$;\qquad\qquad\qquad \qquad\qquad\qquad \qquad\qquad\qquad\qquad \;\quad\COMMENT{always accept $V_1$}\\
\FOR{$i=2;i++;i\leq n$} 
\IF{$|\Phi (\mathbb{V}_{i-1})|=k$} 
\STATE $\Phi (\mathbb{V}_{n})=\Phi  (\mathbb{V}_{i-1})$;
\STATE\textbf{ break};  \quad\quad\qquad\qquad\quad \COMMENT{Complete accepting as SOA runs out of the quota}
\ELSIF{$k-|\Phi (\mathbb{V}_{i-1})|\geq n-i+1$}
\STATE $\Phi (\mathbb{V}_{i})=\Phi (\mathbb{V}_{i-1})\cup V_i$;   \qquad\qquad\qquad \COMMENT{accept $V_i$ by the quota-enough condition}
\ELSE
\IF {$Len(\Phi (\mathbb{V}_{i-1})\cup V_{i})-Len(\Phi (\mathbb{V}_{i-1}))\geq \theta$ with $\theta$ given
 in (\ref{fixedthreshold})} 
\STATE $\Phi (\mathbb{V}_{i})=\Phi (\mathbb{V}_{i-1})\cup V_i$; \qquad\quad\;  \COMMENT{accept $V_i$ by threshold-meeting condition}
\ELSE
\STATE $\Phi (\mathbb{V}_{i})=\Phi (\mathbb{V}_{i-1})$; \qquad\qquad\qquad\qquad\qquad\qquad\;\qquad\qquad\qquad\quad\;\;\COMMENT{reject $V_i$}
\ENDIF
\ENDIF
\ENDFOR
\end{algorithmic}
\end{algorithm}

In the UN setting, SOA always accepts the first released sub-interval $V_1$. On the arrival of each future sub-interval $V_{i}\in \{V_2,...,V_n\}$, SOA accepts $V_{i}$ if and only if it meets one of the following two conditions:  (\romannumeral1) \textbf{Quota-enough condition}, after accepting $V_{i}$, SOA has enough quota to accept all the future sub-intervals, i.e., $k-|\Phi (\mathbb{V}_{i-1})|\geq n-i+1$; (\romannumeral2) \textbf{Threshold-accepting condition}, SOA still has quota (i.e., $|\Phi (\mathbb{V}_{i-1})|\leq k-1$) and $V_i$ contributes an additional length of at least 
\begin{small}
\begin{equation}\label{fixedthreshold}
    \theta=\min\{\frac{\sqrt{1+2(k-1)(n-k)}-1}{2k-2},\frac{\sqrt{9 k^2-14k+9}-k-1}{4(k-1)}\}
\end{equation}\end{small}
\\to the covered length of $[0,a]$ by previously accepted sub-intervals, i.e.,
\begin{equation}\label{fixedcondition}
  Len(\Phi (\mathbb{V}_{i-1})\cup V_{i})-Len(\Phi (\mathbb{V}_{i-1}))\geq \theta
\end{equation}
We summarize SOA in Algorithm \ref{FixHitching} and we note that
\begin{itemize}
    \item Once some sub-interval is accepted by the quota-enough condition, all later-released sub-intervals are accepted by SOA; \\
    \item SOA always uses up its quota to accept $k$ sub-intervals and only breaks (in the Step 5 of Algorithm \ref{FixHitching}) when it accepts $k$ sub-intervals according to the threshold  $\theta$ from the first ($n-1$) released sub-intervals. 
\end{itemize}
\begin{proposition}\label{fixthresholddetermine} In SOA, we have threshold $\theta=\frac{\sqrt{1+2(k-1)(n-k)}-1}{2k-2}$ if $\left \lceil \frac{667n}{1000} \right \rceil \leq k \leq n-1$, or $\theta=\frac{\sqrt{9 k^2-14k+9}-k-1}{4(k-1)}$ if $2\leq k\leq \left \lceil \frac{667n}{1000} \right \rceil-1$.
\end{proposition}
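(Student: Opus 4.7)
The plan is to unpack the definition of $\theta$ in (\ref{fixedthreshold}) by identifying, as a function of $k$ and $n$, which of the two candidate expressions attains the minimum. Write
\begin{equation*}
f_1(k,n) = \frac{\sqrt{1+2(k-1)(n-k)}-1}{2(k-1)}, \qquad f_2(k) = \frac{\sqrt{9k^2-14k+9}-k-1}{4(k-1)},
\end{equation*}
so that $\theta = \min\{f_1(k,n),\, f_2(k)\}$. I would first observe that, for fixed $k\leq n-1$, $f_1$ is strictly increasing in $n$, whereas $f_2$ does not depend on $n$. Moreover, when $k$ is close to $n$ the radicand $1+2(k-1)(n-k)$ is small and hence $f_1$ is small, while $f_2(k)\to 1/2$ as $k$ grows. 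This already suggests that $\min=f_1$ in the \emph{large-}$k$ regime and $\min=f_2$ in the \emph{small-}$k$ regime, which is exactly the dichotomy stated in the proposition.

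Next, I would locate the crossover by solving $f_1(k,n)=f_2(k)$. Multiplying through by $4(k-1)$ and rearranging gives
\begin{equation*}
2\sqrt{1+2(k-1)(n-k)} \;=\; \sqrt{9k^2-14k+9} - (k-1),
\end{equation*}
with both sides nonnegative for $k\geq 2$ (the right-hand side is nonnegative since $9k^2-14k+9\geq(k-1)^2$). Squaring eliminates the outer radical on the left; isolating the remaining radical then reduces the equation to $\sqrt{9k^2-14k+9}=9k-3-4n$; squaring once more yields the clean quadratic
\begin{equation*}
9k^2 - (5+9n)\,k + (2n^2+3n) \;=\; 0.
\end{equation*}
The two roots are $\bigl((5+9n)\pm\sqrt{9n^2-18n+25}\bigr)/18$, but only the larger root satisfies the sign constraint $9k-3-4n\geq 0$ imposed by the squaring step; the smaller one is extraneous. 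Hence the true crossover is
\begin{equation*}
k^*(n) \;=\; \frac{(5+9n)+\sqrt{9n^2-18n+25}}{18}.
\end{equation*}
I would then verify, by evaluating the sign of $f_1-f_2$ at a convenient boundary such as $k=n-1$ (where $f_1$ is explicitly small) together with monotonicity of $f_1(k,n)-f_2(k)$ in $k$, that $f_1\leq f_2$ holds precisely for $k\geq k^*(n)$.

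The final step is to match the analytic crossover $k^*(n)$ with the clean rational cutoff $\lceil 667n/1000\rceil$ in the proposition. Using $\sqrt{9n^2-18n+25}\leq 3n-3+C/n$ for an explicit constant $C$, one bounds $k^*(n)\leq \tfrac{2n}{3}+O(1/n)$; since $667/1000$ strictly exceeds $2/3$ by a margin that dominates the lower-order correction for every $n\geq k+1\geq 3$, one obtains $\lceil 667n/1000\rceil \geq k^*(n)$ while $\lceil 667n/1000\rceil-1 < k^*(n)$, which is exactly what the proposition claims. I expect this matching step to be the principal technical obstacle: the constant $667/1000$ is a convenient rational approximation of $2/3$, and handling the ceiling together with small-$n$ boundary cases (and cross-checking with the quota-enough branch of Algorithm \ref{FixHitching}, which fires near $k=n-1$) requires a careful case-by-case verification rather than a single closed-form inequality.
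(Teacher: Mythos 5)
Your derivation of the exact crossover is correct and actually goes further than the paper's own proof, which merely asserts ``by calculations'' the two boundary inequalities $f_1\le f_2$ at $k=\lceil 667n/1000\rceil$ and $f_1\ge f_2$ at $k=\lceil 667n/1000\rceil-1$ and then invokes the monotonicity of $f_1$ (decreasing in $k$) and $f_2$ (increasing in $k$). Your algebra checks out: the equation $f_1=f_2$ does reduce to $\sqrt{9k^2-14k+9}=9k-3-4n$ and then to $9k^2-(5+9n)k+(2n^2+3n)=0$, and the smaller root is indeed extraneous, so the true crossover is $k^*(n)=\bigl(5+9n+\sqrt{9n^2-18n+25}\bigr)/18$.

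The gap is in your final matching step, and it is not just a technical obstacle --- the direction of the argument is inverted. Since $9n^2-18n+25=(3n-3)^2+16$, one gets $\tfrac{2n}{3}+\tfrac19<k^*(n)\le\tfrac{2n}{3}+\tfrac19+\tfrac{4}{27(n-1)}$, i.e., $k^*$ sits above $2n/3$ by an essentially constant amount, whereas $\lceil 667n/1000\rceil$ exceeds $2n/3$ by roughly $n/3000$, which grows with $n$. The first half of your sandwich, $\lceil 667n/1000\rceil\ge k^*(n)$, is therefore fine, but the second half, $\lceil 667n/1000\rceil-1<k^*(n)$, fails once $n/3000$ exceeds about $10/9$, i.e., for $n$ in the few thousands. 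Concretely, at $n=4000$ one has $k^*\approx 2666.78$ while $\lceil 667n/1000\rceil-1=2667$, and indeed $f_1(2667,4000)\approx 0.49981<f_2(2667)\approx 0.49988$, so $\theta=f_1$ there even though your matching claim (and the proposition itself) assigns $\theta=f_2$. In other words, your computation, carried honestly to its conclusion, shows that the clean cutoff $\lceil 667n/1000\rceil$ cannot coincide with $\lceil k^*(n)\rceil$ for all $n$; it does hold for moderate $n$ (e.g., at the paper's illustrative $n=100$, where $k^*\approx 66.78$ and the cutoff is $67$), but to make the proof go through in general you would need either to restrict the range of $n$ or to replace $667/1000$ by the exact cutoff $\lceil k^*(n)\rceil$ coming from your quadratic.
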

\begin{proof}
First, we notice the following two inequalities by calculations,
\begin{center}
$\frac{\sqrt{1+2(\left \lceil \frac{667n}{1000} \right \rceil-1)(n-\left \lceil \frac{667n}{1000} \right \rceil)}-1}{2\left \lceil \frac{667n}{1000} \right \rceil-2}\leq \frac{\sqrt{9  \left \lceil \frac{667n}{1000} \right \rceil^2-14\left \lceil \frac{667n}{1000} \right \rceil+9}-\left \lceil \frac{667n}{1000} \right \rceil-1}{4(\left \lceil \frac{667n}{1000} \right \rceil-1)}$
\end{center}
and
\begin{center}
$\frac{\sqrt{1+2(\left \lceil \frac{667n}{1000} \right \rceil-2)(n-\left \lceil \frac{667n}{1000} \right \rceil+1)}-1}{2\left \lceil \frac{667n}{1000} -1\right \rceil-2}\geq \frac{\sqrt{9  (\left \lceil \frac{667n}{1000} \right \rceil-1)^2-14(\left \lceil \frac{667n}{1000} \right \rceil-1)+9}-\left \lceil \frac{667n}{1000} \right \rceil-2}{4(\left \lceil \frac{667n}{1000} -1\right \rceil-1)}.$\end{center}

Further, we know
$\frac{\sqrt{1+2(k-1)(n-k)}-1}{2k-2}$ decreases as $k$ increases within $\{2,3,...,n-1\}$, while $\frac{\sqrt{9  k^2-14k+9}-k-1}{4(k-1)}$ increases as $k$ increases, given the number $n$ of online sub-intervals. Then, we get the proposition 2.  
\end{proof}
\begin{theorem} \label{FixHitchingratio}
For UL-UN, SOA runs in $O(n)$ time and achieves a competitive ratio no larger than
$\min\{\frac{\sqrt{1+2(k-1)(n-k)}-1}{k-1}+1,\frac{\sqrt{9 k^2-14k+9}-k-1}{2(k-1)}+1\}$.
\end{theorem}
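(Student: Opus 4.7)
The plan is to establish the $O(n)$ runtime and then the competitive ratio in four steps. The runtime is immediate: SOA processes each $V_i$ once, and the marginal $Len(\Phi(\mathbb{V}_{i-1})\cup V_i)-Len(\Phi(\mathbb{V}_{i-1}))$ can be maintained in amortized constant time by keeping a sorted list of the endpoints of the currently accepted intervals, so the total work is $O(n)$.

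For the competitive ratio, I would first record three structural facts about SOA: (i) $V_1$ is always accepted, so $\text{SOA}\geq 1$; (ii) at termination $|\Phi(\mathbb{V}_n)|=k$, because the first index $i^{*}$ at which the quota-enough test succeeds must do so with \emph{equality} $k-|\Phi(\mathbb{V}_{i^{*}-1})|=n-i^{*}+1$ (a short case analysis shows that if $V_{i^{*}-1}$ had been accepted one derives a contradiction with $i^{*}$ being the first trigger, while if $V_{i^{*}-1}$ was rejected the quota-enough inequality at $i^{*}-1$ forces the equality at $i^{*}$); (iii) every rejected $V_i$ satisfies $Len(V_i\setminus\bigcup \Phi(\mathbb{V}_n))<\theta$, since $\Phi(\mathbb{V}_{i-1})\subseteq\Phi(\mathbb{V}_n)$. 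From (iii) and inclusion, I would derive the central bound $\text{OPT}\leq \text{SOA}+|O\setminus P|\,\theta$, where $P=\Phi(\mathbb{V}_n)$ and $O$ is OPT's selected set. Since $O\setminus P\subseteq \mathbb{V}\setminus P$ with $|\mathbb{V}\setminus P|=n-k$, we also get $|O\setminus P|\leq\min\{k,n-k\}$, and unit length plus the quota gives the trivial $\text{OPT}\leq k$.

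Next I would lower-bound SOA by case-splitting on whether the quota-enough rule ever fires. In \emph{Case A} (it never fires, so all $k$ acceptances are by threshold), each accepted $V_i\neq V_1$ contributes at least $\theta$ of new length, hence $\text{SOA}\geq 1+(k-1)\theta$, which combined with the OPT bounds yields $\text{OPT}/\text{SOA}\leq \min\{k/(1+(k-1)\theta),\,1+\min(k,n-k)\theta/(1+(k-1)\theta)\}$. In \emph{Case B} (it fires at some $i^{*}$), the worst adversarial subcase has only $V_1$ threshold-accepted and forces all quota-enough acceptances to coincide with $V_1$, giving $\text{SOA}=1$. The crucial geometric observation is that each rejected unit interval $V_j$ overlaps $V_1$ by more than $1-\theta$, so $V_j\subseteq[o_1-\theta,d_1+\theta]$; together with the quota-enough copies of $V_1$, every interval OPT could pick lies inside this length-$(1+2\theta)$ window, giving $\text{OPT}\leq 1+2\theta$ and ratio $\leq 1+2\theta$.

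To conclude I would verify the two claimed upper bounds algebraically. For $\theta=\theta_2$, the defining relation $(1+2\theta_2)(1+(k-1)\theta_2)=k$ makes Case A's first term equal $1+2\theta_2=\rho_2$, and a short check using $\theta_2<1/2$ shows the second Case A term is also $\leq\rho_2$, while Case B's bound equals $\rho_2$; hence the ratio is $\leq\rho_2$. For $\theta=\theta_1$, the defining relation $2\theta_1(1+(k-1)\theta_1)=n-k$ makes Case A's $(n-k)$-bound evaluate to $1+2\theta_1^{2}\leq 1+2\theta_1=\rho_1$ (in $\theta_1$'s regime $n\leq 3k/2$ forces $\theta_1\leq 1$), and Case B's bound equals $1+2\theta_1=\rho_1$; hence the ratio is $\leq\rho_1$. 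Proposition~\ref{fixthresholddetermine} selects whichever of $\theta_1,\theta_2$ produces the smaller $\rho$, giving the theorem's $\min\{\rho_1,\rho_2\}$. I expect the main obstacle to be the Case B structural argument: one must justify both that the adversary's worst Case-B instance genuinely forces $\text{SOA}=1$ and that \emph{every} interval OPT could choose is confined to the length-$(1+2\theta)$ window around $V_1$, so $\text{OPT}\leq 1+2\theta$; the remaining algebraic steps are routine manipulations of the defining quadratics for $\theta_1$ and $\theta_2$.
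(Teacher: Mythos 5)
There are two genuine gaps. First, your central bound $\mathrm{OPT}\leq \mathrm{SOA}+|O\setminus P|\,\theta$ is false as stated, because your structural fact (iii) conflates two kinds of non-accepted sub-intervals. A sub-interval is only guaranteed to add less than $\theta$ of new length if it was actually \emph{tested} against the threshold and failed; but once SOA has accepted its $k$-th sub-interval it breaks out of the loop, and every later $V_i$ is skipped without any test. In the unit-length setting each such skipped interval can contribute a full unit of new length to OPT (e.g.\ $k=2$, $n=4$, $V_2$ adding exactly $\theta\approx 0.28$ and $V_3,V_4$ disjoint unit intervals far away: $\mathrm{OPT}-\mathrm{SOA}\approx 0.72>2\theta$). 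The paper's proof of its Case 2 therefore uses $\mathrm{OPT}\leq \mathrm{SOA}+(n-k)$ with coefficient $1$, not $\theta$; with the defining relation $2\theta_1\bigl(1+(k-1)\theta_1\bigr)=n-k$ this gives exactly $1+2\theta_1$, not your $1+2\theta_1^{2}$. Your final numbers survive only because your erroneous bound is \emph{stronger} than the truth; the step itself must be replaced by the coefficient-$1$ bound.

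Second, your Case B argument covers only the extreme subcase in which a single threshold-accepted interval ($V_1$) forms one connected component and $\mathrm{SOA}=1$; you assert without justification that this is the worst case. The paper's corresponding Case 1 handles the general situation: the threshold-accepted sub-intervals form $x$ disjoint components $\mathbb{A}_1,\dots,\mathbb{A}_x$ with $1\leq x\leq k$, every threshold-rejected interval adds less than $\theta$ of length confined to the $2x$ ends of these components, so $\mathrm{OPT}\leq \mathrm{SOA}+2x\theta$, and since each component has length at least $1$ (unit intervals), $\mathrm{SOA}\geq x$ and the ratio is at most $1+2\theta$. This is the missing reduction; your "window of length $1+2\theta$ around $V_1$" is the $x=1$ instance of it. You correctly flag this as the main obstacle, but as written the proof does not close it, and the generalization (components vs.\ their total length) is the key idea rather than a routine verification. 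The runtime argument and the algebraic identities for $\theta_1,\theta_2$ are correct and match the paper's.
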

\begin{proof}
SOA runs in $O(n)$ time as it runs in no more than $n$ iterations in which each iteration runs in $O(1)$ time. To show the upper bound of SOA, we discuss in the following two cases.
\\ \textbf{Case 1}.  SOA accepts $V_n$ (the last released sub-interval).

This shows that SOA triggers the quota-enough condition when accepting some $V_i\in\{V_2,V_3,...,V_n\}$, i.e., $k-|\Phi (\mathbb{V}_{i-1})|\geq n-i+1$. Then, the algorithm accepts all the sub-intervals $\{V_i,V_{i+1},...,V_n\}$ that are released later than $V_i$. Further,
$Len(\Phi (\mathbb{V}_{n}))=Len(\Phi (\mathbb{V}_{i-1})\cup \{V_i,V_{i+1}...,V_{n}\})$. In the worst case, none of the accepted sub-intervals in $\{V_i,V_{i+1},...,V_{n}\}$ contributes additional length to the algorithm since these accepted sub-intervals are also available to OPT. 
Suppose, without loss of generality, that $\bigcup\limits_{V\in \Phi (\mathbb{V}_{i-1})}V$ consists of a number \textbf{$x$} of disjoint intervals, which are denoted by $\mathbb{A}_1$, $\mathbb{A}_2$,..., $\mathbb{A}_x$ respectively. Clearly, $1\leq x\leq k$. 
Namely, $Len(\mathbb{A}_1)$,..., $Len(\mathbb{A}_x)$ respectively denote the length of disjoint intervals of $\Phi (\mathbb{V}_{i-1})$. Hence, $ Len(\Phi (\mathbb{V}_{i-1}))=\sum_{i=1}^xLen(\mathbb{A}_i)$.

Note that each rejected sub-interval can contribute an additional length no more than $\theta$ to $Len(\Phi (\mathbb{V}_{i-1}))$, as otherwise it would have been accepted. On one hand, $ Len(OPT)\leq Len(\Phi (\mathbb{V}_{n}))+ \theta (n-k)$ holds naturally since SOA totally rejects ($n-k$) sub-intervals in $\mathbb{V}_{i-1}$. On the other hand, $Len(OPT)\leq Len(\Phi (\mathbb{V}_{n}))+ 2 \theta  x$ because there are totally $x$ disjoint intervals formed by the sub-intervals accepted by SOA, which implies there are at most $2x$ chances that sub-interval could be missed/rejected by SOA, and each rejected sub-interval can contribute less than $\theta$ to SOA (by Step 9 of SOA). In summary, the  overall length achieved by the OPT is bounded by the following Inequality (\ref{OPTUBCASE1fixed01}).
\begin{equation}\label{OPTUBCASE1fixed01}
    Len(OPT)\leq Len(\Phi (\mathbb{V}_{n}))+ \min\{2 \theta  x,\theta (n-k)\}
\end{equation}
Hence, we get the ratio 
\begin{small}
\begin{flalign*}
\rho&=\frac{Len(OPT)}{Len(\Phi (\mathbb{V}_{n}))}\leq 1+\frac{\min\{2 \theta x,\theta  (n-k)\}}{Len(\Phi (\mathbb{V}_{n}))}\leq 1+\frac{2 \theta  x}{\sum_{i=1}^x Len(\mathbb{A}_i)}&\\
&\leq 1+\min\{\frac{\sqrt{1+2(k-1)(n-k)}-1}{k-1},\frac{\sqrt{9 k^2-14k+9}-k-1}{2(k-1)}\} &
\end{flalign*}\end{small}
\\
in which the first inequality holds by (\ref{OPTUBCASE1fixed01}), the second inequality holds by $Len(\Phi(\mathbb{V}_{n}))\geq Len(\Phi(\mathbb{V}_{i-1}))=\sum_{i=1}^xLen(\mathbb{A}_i)$, and the last inequality holds by $\sum_{i=1}^x Len(\mathbb{A}_i)\geq x$ and (\ref{fixedthreshold}).
\\
\textbf{Case 2}.  SOA does not accept $V_n$. 

This implies the quota-enough condition is not triggered during the execution and SOA accepts $k$ sub-intervals by the threshold-accepting condition. This implies the following Inequality (\ref{SOAoveralllength}) because each accepted sub-interval, except $V_1$ (which contributes 1 to SOA), contributes at least $\theta$ to SOA, see Step 9 of SOA.
\begin{equation}\label{SOAoveralllength}
 Len(\Phi (\mathbb{V}_{n}))\geq 1+(k-1)\theta
\end{equation}
Suppose that $V_i\in\{V_k,...,V_{n-1}\}$ is the last accepted sub-interval by SOA, i.e., $|\Phi(\mathbb{V}_i)|=k$ and
$Len(\Phi(\mathbb{V}_n))=Len(\Phi(\mathbb{V}_{i}))$. In other words, SOA misses all sub-intervals in $\{V_{i+1},...,V_n\}$ which can be accepted by OPT. Since the algorithm can miss at most $n-k$ sub-intervals, OPT can get an accumulating length at most $n-k$ more than that accepted by SOA, i.e., $Len(OPT)\leq Len(\Phi(\mathbb{V}_n))+n-k$. Also, OPT cannot get a length over its quota $k$ in this unit-length case. In summary, the overall length accepted by the OPT is bounded by (\ref{OPTUBINCASE2}). 
\begin{equation}\label{OPTUBINCASE2}
    Len(OPT)\leq \min\{k,Len(\Phi(\mathbb{V}_n))+n-k\}
\end{equation}
We further discuss two sub-cases.
\\
\textbf{Case 2.1}.  $\left \lceil \frac{667n}{1000} \right \rceil \leq k \leq n-1$.
We have $\theta =\frac{\sqrt{1+2(k-1)(n-k)}-1}{2k-2}$ by Proposition \ref{fixthresholddetermine}. Note that
$\frac{\partial (\frac{\theta}{\frac{2k-n-1}{k-1}})}{\partial k}=\frac{8k^2-(8n+10)k+9n-3}{4(2k-n-1)^2\sqrt{1+2(k-1)(n-k)}}<0$ for each $k\in [\left \lceil \frac{667n}{1000}\right \rceil,n]\subseteq (\frac{8n+10-\sqrt{(8n-8)^2+132}}{16},\frac{8n+10+\sqrt{(8n-8)^2+132}}{16})$. Further, we have
\begin{equation}\label{conditionincase21iFixratio}
    \begin{small}
    \frac{\theta}{\frac{2k-n-1}{k-1}}\leq \frac{\theta}{\frac{2k-n-1}{k-1}}|_{k=\left \lceil \frac{667n}{1000} \right \rceil}<1
    \end{small}
\end{equation}
Hence, 
\begin{small}
\begin{flalign*}
    &\rho=\frac{Len(OPT)}{Len(\Phi (\mathbb{V}_{n}))} &\\
    &\leq \min\{\frac{k}{Len(\Phi (\mathbb{V}_{n}))},1+\frac{n-k}{Len(\Phi (\mathbb{V}_{n}))}\} &{\rm by}\; (\ref{OPTUBINCASE2})\\
    &\leq \min\{\frac{k}{1+(k-1)\theta},1+\frac{n-k}{1+(k-1)\theta}\} &{\rm by}\;(\ref{SOAoveralllength})\\
    &= \frac{\sqrt{1+2(k-1)(n-k)}-1}{k-1}+1  &{\rm by}\;(\ref{conditionincase21iFixratio})\;{\rm and\;}\theta =\frac{\sqrt{1+2(k-1)(n-k)}-1}{2k-2} 
\end{flalign*}\end{small}
\\
\textbf{Case 2.2.}  $2\leq k\leq \left \lceil \frac{667n}{1000} \right \rceil-1$. We have $\theta=\frac{\sqrt{9  k^2-14k+9}-k-1}{4(k-1)}\leq \frac{\sqrt{1+2(k-1)(n-k)}-1}{2k-2}$ by Proposition \ref{fixthresholddetermine}. Hence,
\begin{small}
\begin{flalign*}
    &\rho=\frac{Len(OPT)}{Len(\Phi (\mathbb{V}_{n}))} &\\
    &\leq \min\{\frac{k}{Len(\Phi (\mathbb{V}_{n}))},1+\frac{n-k}{Len(\Phi (\mathbb{V}_{n}))}\} &{\rm by}\; (\ref{OPTUBINCASE2})\\
    &\leq \min\{\frac{k}{1+(k-1)\theta},1+\frac{n-k}{1+(k-1)\theta}\} &{\rm by}\;(\ref{SOAoveralllength})\\
    &\leq \frac{k}{1+(k-1)  \theta}= \frac{\sqrt{9  k^2-14k+9}-k-1}{2(k-1)}+1&{\rm by}\;\theta=\frac{\sqrt{9  k^2-14k+9}-k-1}{4(k-1)}
\end{flalign*}\end{small}

By Case 1 and Case 2, the proof completes.
\end{proof}
\begin{algorithm}[tb] 
\caption{\quad \textbf{SOA$\bm{_{\rm AN}}$} }
\label{soaan}
The SOA${_{\rm AN}}$ remains the same as the Algorithm \ref{FixHitching} by discarding  the \textbf{else if} branch of the quota-enough condition in Lines 6-7 and setting $\theta=\frac{\sqrt{9 k^2-14k+9}-k-1}{4(k-1)}$;
\end{algorithm}
\begin{corollary}\label{ub_ulan}
For UL-AN, SOA$_{\rm AN}$ runs in $O(n)$ time and achieves a competitive ratio no larger than
$\frac{\sqrt{9 k^2-14k+9}-k-1}{2(k-1)}+1$ for any limited time frame.
\end{corollary}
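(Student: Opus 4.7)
The plan is to adapt the proof of Theorem \ref{FixHitchingratio} to the AN setting, where the total number $n$ of sub-intervals is not known in advance. Since SOA$_{\rm AN}$ differs from SOA only by (i) dropping the quota-enough branch and (ii) fixing the threshold at $\theta=\frac{\sqrt{9k^2-14k+9}-k-1}{4(k-1)}$, the runtime bound carries over verbatim: over any limited time frame containing $n$ sub-intervals, SOA$_{\rm AN}$ runs in at most $n$ iterations, each doing $O(1)$ work to maintain $\Phi(\mathbb{V}_i)$ and evaluate the threshold-accepting condition.

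For the competitive ratio, I would split on whether SOA$_{\rm AN}$ uses up its quota. Let $y=|\Phi(\mathbb{V}_n)|$. When $y=k$, the first accepted sub-interval $V_1$ contributes $1$ and each of the other $k-1$ accepted sub-intervals contributes at least $\theta$ of new coverage at the moment of acceptance, so $Len(\Phi(\mathbb{V}_n))\geq 1+(k-1)\theta$. In the UL setting, $Len(OPT)\leq k$ holds trivially because OPT picks at most $k$ unit-length sub-intervals. Hence $\rho\leq k/(1+(k-1)\theta)$, and the same cross-multiplication identity used in Case 2.2 of Theorem \ref{FixHitchingratio} reduces this exactly to $1+2\theta=\frac{\sqrt{9k^2-14k+9}-k-1}{2(k-1)}+1$.

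When $y<k$, every sub-interval outside $\Phi(\mathbb{V}_n)$ was rejected by the threshold test alone, so at its arrival it satisfied $|V\setminus\Phi(\mathbb{V}_{i-1})|<\theta$ and a fortiori $|V\setminus\Phi(\mathbb{V}_n)|<\theta$ since $\Phi(\mathbb{V}_{i-1})\subseteq\Phi(\mathbb{V}_n)$. Let $x$ be the number of maximal disjoint components of $\bigcup_{V\in\Phi(\mathbb{V}_n)}V$; since each component is a union of unit-length sub-intervals, $Len(\Phi(\mathbb{V}_n))\geq x$. The key geometric observation, mirroring the $2x\theta$ bound of Case 1 of Theorem \ref{FixHitchingratio}, is that the extra coverage OPT can get from rejected sub-intervals is localized to the outer regions and the inter-component gaps: on each of the $2x$ sides of the $x$ components, the union of all rejected-sub-interval extensions has length strictly less than $\theta$, because every contributing rejected sub-interval is anchored at a common component boundary and individually extends beyond $\Phi(\mathbb{V}_n)$ by less than $\theta$. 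This yields $Len(OPT)\leq Len(\Phi(\mathbb{V}_n))+2x\theta\leq(1+2\theta)Len(\Phi(\mathbb{V}_n))$, giving $\rho\leq 1+2\theta$ in this case as well.

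The main obstacle I anticipate is this geometric step: showing that the total gain OPT can extract from rejected sub-intervals is at most $2x\theta$, rather than the naive sum $(n-y)\theta$ over all rejections. The argument requires the UL structure in two places, namely that a length-$1$ sub-interval cannot bridge two components separated by a sizable gap, and that overlapping extensions around the same boundary do not accumulate (their union stays within a $\theta$-window). This is precisely why the bound does not degrade as $n$ grows and why $n$ can be removed from the analysis to obtain a ratio suitable for the AN setting.
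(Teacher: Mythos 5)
Your proposal matches the paper's proof: the paper also splits on whether SOA$_{\rm AN}$ exhausts its quota, using $Len(OPT)\leq k$ against $Len(\Phi(\mathbb{V}))\geq 1+(k-1)\theta$ in the exhausted case and the $2x\theta$ component-boundary bound (imported from Case 1 of Theorem \ref{FixHitchingratio}) in the other, with the same algebraic identity $k/(1+(k-1)\theta)=1+2\theta$ closing both. The only difference is that you spell out the geometric accumulation argument that the paper leaves as a reference to the earlier theorem; your version of it is sound.
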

\begin{proof}
In the UL-AN setting, where the number $|\mathbb{V}|$ of total released sub-intervals is unknown to the online algorithm in advance, we discuss two cases in any limited time period $T$ for acceptance.
\\
\textbf{Case 1.} SOA$_{\rm AN}$ runs out of its quota $k$ in time frame $T$ and accepts $k$ sub-intervals by the single threshold $\theta=\frac{\sqrt{9 k^2-14k+9}-k-1}{4(k-1)}$. This implies the overall covered length by SOA$_{\rm AN}$ as
$Len(\Phi(\mathbb{V}))\geq 1+(k-1)\theta$. On the other hand, we have a trivial upper bound on the covered length of OPT as $Len(OPT)\leq k$. Hence, we have the ratio 
\begin{equation*} \label{corollary4case1}
    \rho\leq \frac{k}{1+(k-1)\theta}=\frac{\sqrt{9 k^2-14k+9}-k-1}{2(k-1)}+1.
\end{equation*}
\\
\textbf{Case 2.} SOA$_{\rm AN}$ still remains quota ($\geq 1$) after the time frame $T$. We get the ratio as $\rho\leq 1+2\theta=\frac{\sqrt{9 k^2-14k+9}-k-1}{2(k-1)}+1$ by a similar proof idea as in the Case 1 of Theorem 4.

By Cases 1-2, we get the upper bound of SOA$_{\rm AN}$ as $\frac{\sqrt{9 k^2-14k+9}-k-1}{2(k-1)}+1$.
\end{proof}
The SOA algorithm can solve the the flexible-length case. Using a similar analysis idea as in Theorem \ref{FixHitchingratio}, we have the following Theorem \ref{unit-sumUB1} and Theorem \ref{Varied_FixHitchingratio}.
\begin{theorem} \label{unit-sumUB1}
For US-UN, SOA runs in $O(n)$ time and achieves a competitive ratio no larger than
$\min\{\frac{\sqrt{1+2(k-1)(n-k)}-1}{k-1}+1,\frac{\sqrt{9  k^2-14k+9}-k-1}{2(k-1)}+1\}$.
\end{theorem}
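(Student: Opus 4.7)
The plan is to mirror the two-case analysis from the proof of Theorem~\ref{FixHitchingratio}, leveraging the fact that in the US-UN setting each batch $V_i$ still has total length exactly $1$ (summed over its finitely many disjoint pieces). Consequently, the key quantitative facts used there carry over without modification: $V_1$ contributes a covered length of $1$ to $\Phi(\mathbb{V}_n)$; each batch accepted through the threshold-accepting condition contributes at least $\theta$ of marginal coverage; each rejected batch can contribute at most $\theta$ to $Len(\mathrm{OPT})$ beyond $Len(\Phi(\mathbb{V}_n))$ (otherwise it would have been accepted); and the trivial bound $Len(\mathrm{OPT})\le k$ still holds because OPT selects at most $k$ batches of unit total length each. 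The $O(n)$ running time is also immediate, since for each incoming batch the marginal covered length can be evaluated in amortized constant time and SOA then performs a single threshold test.

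Following Theorem~\ref{FixHitchingratio}, I split the analysis into Case~1 (SOA accepts $V_n$, which means the quota-enough condition was triggered at some earlier step) and Case~2 (SOA rejects $V_n$, so SOA has filled its quota of $k$ entirely via the threshold-accepting rule). In Case~2, the argument transfers almost verbatim: one has $Len(\Phi(\mathbb{V}_n))\ge 1+(k-1)\theta$ and $Len(\mathrm{OPT})\le \min\{k,\,Len(\Phi(\mathbb{V}_n))+(n-k)\}$, and then Proposition~\ref{fixthresholddetermine}, which pins down which of the two candidate thresholds in~(\ref{fixedthreshold}) is active depending on whether $k\ge \lceil 667n/1000\rceil$ or $k\le \lceil 667n/1000\rceil-1$, together with the algebraic manipulations in Cases~2.1 and~2.2 of Theorem~\ref{FixHitchingratio}, yields exactly the two candidate ratios $\tfrac{\sqrt{1+2(k-1)(n-k)}-1}{k-1}+1$ and $\tfrac{\sqrt{9k^2-14k+9}-k-1}{2(k-1)}+1$ that appear inside the claimed $\min$.

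The main obstacle lies in Case~1. There, the Theorem~\ref{FixHitchingratio} proof bounds the OPT surplus by $\min\{2\theta x,\theta(n-k)\}$ and then invokes $\sum_{i=1}^x Len(\mathbb{A}_i)\ge x$ on the $x$ disjoint components of $\Phi(\mathbb{V}_{i-1})$, an inequality that relies on each accepted set being a single unit-length interval; in US a batch may split into many tiny disjoint pieces so that $\sum_{i=1}^x Len(\mathbb{A}_i)$ need not exceed $x$. To restore the bound I would recount ``boundary chances'' at the batch level rather than the piece level: since each rejected batch is tested atomically against the full accepted union $\Phi(\mathbb{V}_{i-1})$, each such batch still adds at most $\theta$ of new coverage regardless of how many pieces it contains, and the batch-count $|\Phi(\mathbb{V}_{i-1})|$ (rather than the piece-count) plays the role of $x$. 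Combining this refined bound with $Len(\mathrm{OPT})\le k$ from the quota cap and $Len(\Phi(\mathbb{V}_n))\ge 1$ from $V_1$ reproduces the $\rho\le 1+2\theta$ estimate of Case~1 in Theorem~\ref{FixHitchingratio}, which matches Case~2 through the choice of $\theta$ in~(\ref{fixedthreshold}) and completes the proof.
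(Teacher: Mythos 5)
Your Case~2 and the running-time claim do transfer exactly as in the paper's proof, and you are right to isolate Case~1 as the place where the UL argument breaks: the inequality $\sum_{i=1}^{x}Len(\mathbb{A}_i)\ge x$ genuinely fails once an accepted batch can shatter into many short pieces (the paper's own proof of this theorem merely asserts that Case~1 ``follows the proof idea'' of Theorem~\ref{FixHitchingratio} and never confronts this, so spotting it is to your credit). The problem is that your proposed repair does not close the hole. Substituting the batch-count $b=|\Phi(\mathbb{V}_{i-1})|$ for the piece-count $x$ is unjustified: the factor $2x$ in the UL argument is geometric --- a rejected \emph{unit interval} with marginal below $\theta$ must protrude from one of the $2x$ endpoints of the accepted components, and protrusions anchored at the same endpoint are nested, so at most $2x$ of them contribute disjoint new length. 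A rejected \emph{batch} with marginal below $\theta$ can place its pieces anywhere in $[0,a]$, so the surpluses of distinct rejected batches need not overlap at all; the only valid count is $\min\{k,\,n-k\}$, not $2b$. Moreover, even granting $Len(\mathrm{OPT})\le Len(\Phi(\mathbb{V}_n))+2\theta b$, the denominator you have is only $Len(\Phi(\mathbb{V}_n))\ge 1+(b-1)\theta$, and $\tfrac{2\theta b}{1+(b-1)\theta}>2\theta$ for every $b\ge 2$ because $\theta<1$, so the chain does not ``reproduce the $\rho\le 1+2\theta$ estimate'' as claimed.

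The difficulty is not cosmetic: the $1+2\theta$ bound for Case~1 actually fails in the US setting. Take $k=10$, $n=20$, so $\theta\approx 0.465$ by~(\ref{fixedthreshold}); let $V_1=[0,1]$, let $V_j=[0,\,1-\theta+\delta]\cup[j,\,j+\theta-\delta]$ for $j=2,\dots,11$ (each a batch of total length $1$ with marginal $\theta-\delta<\theta$ against $[0,1]$), and let $V_{12}=\dots=V_{20}=[0,1]$. SOA rejects $V_2,\dots,V_{11}$, then quota-enough accepts $V_{12},\dots,V_{20}$, finishing with covered length $1$, while OPT takes $V_2,\dots,V_{11}$ for covered length $1+9(\theta-\delta)\approx 5.18$, far above $1+2\theta\approx 1.93$. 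So no recounting of the $2\theta x$ argument can rescue Case~1 here; the statement would need either a different algorithmic safeguard against scattered batches or a weaker bound in the US setting.
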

\begin{proof}
SOA runs in $O(n)$ time since it runs in no more than $n$ iterations in which each runs in $O(1)$ time. In the US-UN setting, the sum of length of  sub-intervals in each $V_i\in\mathbb{V}$ is a unit. We also discuss two cases. 
\\
\textbf{Case 1}. SOA accepts $V_n$ (the last candidate released). Suppose, without loss of generality, that $\bigcup\limits_{V\in \Phi (\mathbb{V}_{i-1})}V$ consists of a number \textbf{$x$} of disjoint interval, denoted by $\mathbb{A}_1$, $\mathbb{A}_2$,..., $\mathbb{A}_x$ respectively. Note that each rejected candidate can contribute an additional length at most $\theta$ to $Len(\Phi (\mathbb{V}_{i-1}))$, as otherwise it would have been accepted. On one hand, $ Len(OPT)\leq Len(\Phi (\mathbb{V}_{n}))+ \theta   (n-k)$ holds naturally since SOA totally rejects ($n-k$) candidates in $\mathbb{V}_{i-1}$. On the other hand, $  Len(OPT)\leq Len(\Phi (\mathbb{V}_{n}))+ 2 \theta   x$ because there are $x$ disjoint intervals formed by the candidates accepted by SOA. In summary, 
\begin{equation}\label{OPTUBCASE1fixed}
    Len(OPT)\leq Len(\Phi (\mathbb{V}_{n}))+ \min\{2 \theta   x,\theta   (n-k)\}
\end{equation}
Following the proof idea of the Case 1 in Theorem \ref{FixHitchingratio}, we also get 
$$\rho\leq 1+2\theta\leq \min\{\frac{\sqrt{1+2(k-1)(n-k)}-1}{k-1}+1,\frac{\sqrt{9  k^2-14k+9}-k-1}{2(k-1)}+1\}$$
\\
\textbf{Case 2}.  SOA does not accept $V_n$. This implies the quota-enough condition is not triggered during the execution and SOA accepts $k$ candidates by the threshold-accepting condition. One one hand, we have $Len(SOA)\geq 1+(k-1)\theta$ as each accepted candidate by SOA (excluding the first accepted one, which actually contributes a length of $1$) contributes at least $\theta$ to the overall covered length of the target interval $[0,a]$ in the worst case.  On the other hand, it naturally holds that $Len(OPT)\leq \min\{k,Len(\Phi(\mathbb{V}_n))+(n-k)\}$ as a similar analysis in Theorem 5.
By Proposition 2 and Theorem 4, we have
\begin{small}
\begin{equation*}
\begin{split}
    \rho&=\frac{Len(OPT)}{Len(\Phi (\mathbb{V}_{n}))}\\
   &\leq \min\{\frac{\sqrt{1+2(k-1)(n-k)}-1}{k-1}+1,\frac{\sqrt{9  k^2-14k+9}-k-1}{2(k-1)}+1\}
\end{split}
\end{equation*}
\end{small}
By Case 1 and Case 2, the proof completes.
\end{proof}
\begin{theorem} \label{Varied_FixHitchingratio}
For FL-UN, SOA runs in $O(n)$ time and achieves a competitive ratio no larger than
$\min\{\frac{\sqrt{1+2(k-1)(n-k)m}-1}{k-1}+1,\frac{\sqrt{(1+8m)  k^2-(6+8m)k+9}-k-1}{2(k-1)}+1\}$\\
in which $m$ indicates the maximum possible length of a sub-interval.
\end{theorem}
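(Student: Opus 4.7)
The plan is to replicate the two-case analysis of Theorem~\ref{FixHitchingratio} with the threshold re-tuned to the FL setting. Specifically, I set
\[
\theta=\min\Bigl\{\frac{\sqrt{1+2(k-1)(n-k)m}-1}{2(k-1)},\;\frac{\sqrt{(1+8m)k^{2}-(6+8m)k+9}-k-1}{4(k-1)}\Bigr\},
\]
so that $1+2\theta$ recovers the asserted competitive ratio after routine algebraic simplification. As in Proposition~\ref{fixthresholddetermine}, one of the two expressions decreases in $k$ while the other increases, so a single crossover value of $k$ (now depending on both $n$ and $m$) decides which candidate realises the minimum. The $O(n)$ runtime is identical to that in Theorem~\ref{FixHitchingratio}.

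\textbf{Case 1} is that SOA triggers the quota-enough condition at some step $i$, so that $V_n$ is accepted. I decompose $\bigcup_{V\in\Phi(\mathbb{V}_{i-1})}V$ into its $x$ maximal disjoint pieces $\mathbb{A}_1,\dots,\mathbb{A}_x$, each of length at least $1$ since every FL sub-interval has length at least $1$. I would reuse the inequality
\[
Len(OPT)\le Len(\Phi(\mathbb{V}_n))+\min\{2x\theta,\,(n-k)\theta\}
\]
(the $(n-k)\theta$ bound follows from the threshold rule together with submodularity of $Len(\cdot)$, and the $2x\theta$ bound from charging each rejected sub-interval contributing new coverage to one of the two boundaries of some piece $\mathbb{A}_j$), and dividing by $Len(\Phi(\mathbb{V}_n))\ge\sum_{j}Len(\mathbb{A}_j)\ge x$ yields $\rho\le 1+2\theta$.

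\textbf{Case 2} is that SOA fills its quota purely through the threshold-accepting rule, so $Len(\Phi(\mathbb{V}_n))\ge 1+(k-1)\theta$ (first accepted sub-interval has length $\ge 1$, each subsequent accepted one contributes at least $\theta$). On the OPT side I use $Len(OPT)\le\min\{km,\,Len(\Phi(\mathbb{V}_n))+(n-k)m\}$, which differs from the UL proof only through the factor $m$ (OPT owns at most $k$ sub-intervals of length $\le m$, and at most $n-k$ of OPT's sub-intervals were rejected by SOA, each of length $\le m$). A sub-case split along the crossover value of $k$ then reduces the ratio to either $\frac{km}{1+(k-1)\theta}$ or $1+\frac{(n-k)m}{1+(k-1)\theta}$, and substituting the matching expression for $\theta$ triggers the identities $8km(k-1)=(1+8m)k^{2}-(6+8m)k+9-(k-3)^{2}$ and $2(k-1)(n-k)m=(1+2(k-1)(n-k)m)-1$, collapsing the square roots into the two claimed bounds.

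The main obstacle is justifying the $2x\theta$ charging of Case~1 under variable-length sub-intervals: an OPT sub-interval of length up to $m$ could in principle straddle two consecutive pieces $\mathbb{A}_j$ and $\mathbb{A}_{j+1}$, so I must verify carefully that every rejected OPT sub-interval contributing new coverage can still be charged to at most one such ``boundary slot'' without double-counting. The algebraic identities used in Case~2 are the FL analogues of those hidden inside Theorem~\ref{FixHitchingratio} (the two values of $\theta$ are rigged precisely so the square roots telescope), so once they are verified and the Proposition~\ref{fixthresholddetermine}-style monotonicity argument is carried out for the FL expressions, the remaining manipulations are mechanical.
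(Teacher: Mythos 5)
Your proposal follows essentially the same route as the paper's proof: the same FL-tuned threshold $\theta$, the same two-case split (quota-enough trigger versus pure threshold acceptance), the same bounds $Len(OPT)\le Len(\Phi(\mathbb{V}_n))+\min\{2x\theta,(n-k)\theta\}$ in Case 1 and $Len(OPT)\le\min\{km,\,Len(\Phi(\mathbb{V}_n))+(n-k)m\}$ with $Len(\Phi(\mathbb{V}_n))\ge 1+(k-1)\theta$ in Case 2. The charging subtlety you flag in Case 1 is real but is equally present (and left implicit) in the paper, which simply invokes ``the proof idea of Case 1 in Theorem~\ref{FixHitchingratio}.''
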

\begin{proof}
The running time of SOA is proved to be $O(n)$ in Theorem 4.

In the FL-UN setting, the length of each sub-interval varies in a fixed range $[1,m]$. Given the value $m$ of the maximum length of a sub-interval  the total number $n$ of potential sub-intervals and the quota $k$, we take $$
    \theta=\min\{\frac{\sqrt{1+2(k-1)(n-k)m}-1}{2k-2},\frac{\sqrt{(1+8m)  k^2-(6+8m)k+9}-k-1}{4(k-1)}\}$$ Notice that
$\frac{\sqrt{1+2(k-1)(n-k)m}-1}{2k-2}$ decreases in $k$, while $\frac{\sqrt{(1+8m)  k^2-(6+8m)k+9}-k-1}{4(k-1)}$ increases in $k$, when $k\in\{2,3,\cdots, n-1\}$.  We show the competitive ratio by the following two cases. 
\\ \textbf{Case 1.}  SOA accepts $V_n$ (the last sub-interval released). Following the proof idea of Case 1 in Theorem  \ref{FixHitchingratio}, we have $\rho\leq 1+2\theta\leq \min\{\frac{\sqrt{1+2(k-1)(n-k)m}-1}{k-1}+1,\frac{\sqrt{(1+8m)  k^2-(6+8m)k+9}-k-1}{2(k-1)}+1\}$
\\
\textbf{Case 2.  }SOA does not accept $V_n$. This means the quota-enough condition is not triggered during the execution and SOA accepts $k$ sub-intervals by the threshold-accepting condition. Suppose that $V_i\in\{V_k,...,V_{n-1}\}$ is the last accepted sub-interval by SOA, i.e., $|\Phi(\mathbb{V}_i)|=k$ and
$Len(\Phi(\mathbb{V}_n))=Len(\Phi(\mathbb{V}_{i}))$. In other words, SOA misses all sub-intervals in $\{V_{i+1},...,V_n\}$ which can be accepted by OPT. Since the algorithm can miss at most $n-k$ sub-intervals, OPT can get an overall length less than $Len(\Phi(\mathbb{V}_n))+(n-k)m$. Also, OPT cannot get an overall length over $km$ by its quota $k$ in this unit-length case. \begin{equation}\label{OPTUBINCASE20}
    Len(OPT)\leq \min\{km,Len(\Phi(\mathbb{V}_n))+(n-k)m\}
\end{equation}
\\ Following the proof idea of Case 2 in Theorem 4, we have
\begin{small}
\begin{equation*}
\begin{split}
    &\rho=\frac{Len(OPT)}{Len(\Phi (\mathbb{V}_{n}))}\\
    &\leq \frac{\min\{km,1+(k-1)  \theta+(n-k)m\}}{1+(k-1)  \theta}\\
    & \leq \min\{\frac{\sqrt{1+2(k-1)(n-k)m}-1}{k-1}+1,\frac{\sqrt{(1+8m)  k^2-(6+8m)k+9}-k-1}{2(k-1)}+1\}
\end{split}
\end{equation*}\end{small}

By Case 1 and Case 2, the proof completes.
\end{proof}
\begin{corollary}\label{ub_flan}
For FL-AN, SOA$_{\rm AN}$ runs in $O(n)$ time and achieves a competitive ratio no larger than $\frac{\sqrt{(1+8m)  k^2-(6+8m)k+9}-k-1}{2(k-1)}+1$ for any limited accepting time frame, in which $m$ indicates the maximum possible length of a sub-interval.
\end{corollary}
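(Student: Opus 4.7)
The plan is to mirror the two-case analysis of Corollary \ref{ub_ulan} while substituting the unit-length bounds with their flexible-length counterparts from Theorem \ref{Varied_FixHitchingratio}. In particular, since the arbitrary number setting precludes the use of any $n$-dependent threshold, the correct choice to adopt in SOA$_{\rm AN}$ for FL-AN is
\begin{equation*}
\theta \;=\; \frac{\sqrt{(1+8m)k^2-(6+8m)k+9}-k-1}{4(k-1)},
\end{equation*}
which is exactly the $n$-independent branch of the threshold in Theorem \ref{Varied_FixHitchingratio}. The $O(n)$ running time is immediate from the algorithmic description: SOA$_{\rm AN}$ executes at most $n$ iterations, each performing a single length-comparison in $O(1)$ time. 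I would then fix an arbitrary finite acceptance time frame $T$ and split into the two natural cases based on whether SOA$_{\rm AN}$ exhausts its quota within $T$.

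In the first case, SOA$_{\rm AN}$ accepts its full quota of $k$ sub-intervals, all by the threshold-accepting condition (the quota-enough branch has been discarded). Since the first accepted sub-interval contributes at least $1$ and each of the remaining $k-1$ contributes at least $\theta$ additional length, we obtain $Len(\Phi(\mathbb{V})) \geq 1+(k-1)\theta$. Against the trivial upper bound $Len(OPT)\leq km$ (no $k$ sub-intervals of length at most $m$ can cover more than $km$), this yields
\begin{equation*}
\rho \;\leq\; \frac{km}{1+(k-1)\theta}.
\end{equation*}
A direct algebraic verification shows that $\theta$ is the positive root of $2(k-1)\theta^2 + (k+1)\theta + 1 - km = 0$, which gives the equivalence $\frac{km}{1+(k-1)\theta} = 1 + 2\theta$, matching the claimed bound.

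In the second case, SOA$_{\rm AN}$ still has quota remaining at the end of $T$, so every sub-interval it rejected in $T$ failed the threshold condition and hence could have added at most $\theta$ new length to the already-accepted portion. Let $x$ denote the number of disjoint pieces comprising $\bigcup_{V\in\Phi(\mathbb{V})}V$; as in the Case 1 argument of Theorem \ref{FixHitchingratio} (and of Theorem \ref{Varied_FixHitchingratio}), OPT can augment SOA$_{\rm AN}$'s union only at the at most $2x$ boundary gaps surrounding these pieces, giving $Len(OPT) \leq Len(\Phi(\mathbb{V})) + 2\theta x$. Combined with $Len(\Phi(\mathbb{V}))\geq x$ (each disjoint piece contains at least one accepted sub-interval of length $\geq 1$), this gives $\rho \leq 1 + 2\theta$, which by the same calculation as above equals the claimed bound.

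The only non-routine step is verifying that the chosen $\theta$ simultaneously balances the two upper bounds $\tfrac{km}{1+(k-1)\theta}$ and $1+2\theta$; this is the main obstacle, and it reduces to solving the quadratic displayed above. Once that equality is in hand, both cases yield the same ratio, which completes the proof.
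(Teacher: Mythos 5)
Your proposal is correct and follows essentially the same route as the paper, whose proof of this corollary is only a two-sentence pointer: bound $Len(OPT)$ by $km$ (since $n$ is unavailable in AN) and reuse the Theorem~\ref{Varied_FixHitchingratio} machinery with the $n$-independent branch of the threshold. Your write-up simply makes explicit what the paper leaves implicit — in particular the verification that $\theta$ solves $2(k-1)\theta^2+(k+1)\theta+1-km=0$ so that $\tfrac{km}{1+(k-1)\theta}=1+2\theta$ — and that algebra checks out.
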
 
\begin{proof}
Since the $n=|\mathbb{V}|$ is discarded in the FL-AN setting, we can upper bound the overall covered length by OPT as $km$. Further, by a similar proof idea   as in Theorem 6, one can derive the upper bound of this corollary. \end{proof}
\subsection{Double-threshold Online Algorithm}\label{doasection}
Built upon SOA, we now present the Double-threshold Online Algorithm (\textbf{DOA}) under the UN setting, 
which remains the same as Algorithm \ref{FixHitching} but extends the single threshold $\theta$ in the  threshold-meeting condition to two thresholds $\theta_1$ and $\theta_2$ (by using $\theta_1$ for exploration and $\theta_2$ for exploitation). Specifically, SOA changes the threshold from $\theta_1$ to $\theta_2$ once accepting $\omega$ sub-intervals, in which the values of $(\omega ,\theta_1,\theta_2)$ are given later by solving the non-linear program (\romannumeral1-\romannumeral8). Before that, we first give the competitive analysis of DOA. 

Denote $j$ as the number of disjoint intervals formed by the sub-intervals accepted by DOA. When DOA accepts less than $k$ sub-intervals by threshold, the overall length achieved by OPT is no more than $Len(\Phi(\mathbb{V}_n))+2j \theta$ and certainly no more than the quota $k$, implying Lemma \ref{secondlemmaindouble}. When DOA accepts $k$ sub-intervals by threshold, the overall length of OPT should be no more than ($n-k+1+(\omega -1)\theta_1+(k-\omega)\theta_2$) and also no more than $k$, implying Lemma \ref{thirdlemmaindouble}.
\begin{lemma}\label{secondlemmaindouble}
In UL-UN, when DOA accepts $i$ ($1\leq i\leq k-1$) sub-intervals by threshold and quota-enough accepts $k-i$ sub-intervals, OPT can achieve an overall length at most $\frac{\min\{k,j+(i-j)\theta_1+2  j\theta_1\}}{j+(i-j)\theta_1}$ times of DOA length for $1\leq i\leq \omega -1$ or at most  $\frac{\min\{k,j+(\omega-1)\theta_1+(i-\omega+2j)\theta_2\}}{j+(\omega-1)\theta_1+(i-\omega)\theta_2}$ times  of DOA length for $\omega \leq i\leq k-1$.
\end{lemma}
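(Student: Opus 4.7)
The plan is to mirror Case~1 of Theorem~\ref{FixHitchingratio} while splitting the analysis according to which of the two thresholds DOA has been using, then combine a lower bound on $\text{Len}(\Phi(\mathbb{V}_n))$ with two complementary upper bounds on $\text{Len}(\text{OPT})$.

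For the denominator, I would lower-bound $\text{Len}(\Phi(\mathbb{V}_n))$ by summing the contributions of the $i$ threshold-accepted sub-intervals, observing that the $j$ starters of the disjoint intervals each contribute a full unit length (being unit-length sub-intervals disjoint from the current coverage), while each of the remaining $i-j$ merging sub-intervals contributes at least its active threshold. The worst-case quota-enough acceptances contribute nothing and may be dropped from the accounting. In the regime $1 \le i \le \omega - 1$ all threshold acceptances occur in the exploration phase, giving $\text{Len}(\Phi(\mathbb{V}_n)) \ge j + (i-j)\theta_1$. In the regime $\omega \le i \le k-1$, the first $\omega$ acceptances are under $\theta_1$ (with $V_1$ contributing unit length and each subsequent one at least $\theta_1$) while the remaining $i - \omega$ are under $\theta_2$; after accounting for the $j-1$ extra disjoint-interval starters beyond $V_1$ as unit-length contributions, one obtains $\text{Len}(\Phi(\mathbb{V}_n)) \ge j + (\omega-1)\theta_1 + (i-\omega)\theta_2$.

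For the numerator I would combine two upper bounds on $\text{Len}(\text{OPT})$. The quota cap gives $\text{Len}(\text{OPT}) \le k$ immediately, since OPT selects at most $k$ unit-length sub-intervals. A boundary argument identical in spirit to Case~1 of Theorem~\ref{FixHitchingratio} gives the second: DOA's covered region consists of exactly $j$ disjoint intervals and hence has $2j$ boundary points, and any sub-interval that OPT accepts but DOA rejected must either lie within DOA's coverage (contributing nothing in excess) or extend beyond one of these $2j$ boundaries with an extension capped by the threshold active at its rejection. Using $\theta_1$ in the first regime and $\theta_2$ in the second (valid because $\theta_1 < \theta_2$) yields $\text{Len}(\text{OPT}) \le \text{Len}(\Phi(\mathbb{V}_n)) + 2j\theta_1$ and $\text{Len}(\text{OPT}) \le \text{Len}(\Phi(\mathbb{V}_n)) + 2j\theta_2$ respectively. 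Taking the minimum with the quota cap $k$ and dividing by the corresponding lower bound on $\text{Len}(\Phi(\mathbb{V}_n))$ produces the two claimed ratios.

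The main obstacle I anticipate is the bookkeeping in the second regime, where the distribution of the $j$ disjoint-interval starters across the two threshold phases is not fixed in advance. The lump-sum accounting $j + (\omega-1)\theta_1 + (i-\omega)\theta_2$ treats all $\omega-1$ post-$V_1$ phase-one acceptances as $\theta_1$-contributions and all $i-\omega$ phase-two acceptances as $\theta_2$-contributions, while folding the $j-1$ extra starter bonuses uniformly into the leading $j$; verifying that this remains a valid lower bound regardless of how the starters are actually split between phases, and that the $2j\theta_2$ slack in the upper bound is enough to absorb all $\theta_1$-rejected and $\theta_2$-rejected sub-intervals that OPT might gain at the boundaries, is the subtlest part of the argument.
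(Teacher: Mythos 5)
Your overall strategy is exactly the one the paper intends: the paper itself disposes of this lemma in a single sentence before its statement, combining the quota cap $\mathrm{OPT}\leq k$ with the boundary bound $\mathrm{OPT}\leq Len(\Phi(\mathbb{V}_n))+2j\theta$ and a per-acceptance lower bound on $Len(\Phi(\mathbb{V}_n))$, and your first regime ($1\leq i\leq\omega-1$) and both OPT upper bounds are carried out correctly along those lines.

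The obstacle you flag at the end, however, is not a bookkeeping nuisance but a genuine gap, and your proposal does not close it. In the second regime the denominator $j+(\omega-1)\theta_1+(i-\omega)\theta_2$ credits $j+i-1$ contributions to only $i$ threshold-accepted sub-intervals: the $j-1$ extra component starters are each counted once as a full unit \emph{and} once as a $\theta_1$- or $\theta_2$-contribution. This is harmless only if replacing a starter's threshold credit by $1$ never costs more than the phantom credit gained, which fails precisely when an extra starter is accepted in the $\theta_2$ phase. Concretely, take $\omega=2$, $i=3$, $j=2$, $\theta_1=0.3$, $\theta_2=0.5$, with $V_1=[0,1]$, a second acceptance $[0.5,1.5]$ contributing $0.5\geq\theta_1$, and a third acceptance $[3,4]$ that is disjoint and hence a starter in the $\theta_2$ phase: the actual covered length is $2.5$, while your claimed lower bound is $2+\theta_1+\theta_2=2.8$. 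In general the guaranteed length is only $j+(\omega-1)\theta_1+(i-\omega)\theta_2-(j-1)\theta_2$ when all extra starters fall in the exploitation phase, so the ratio as stated is not established by your argument. To repair it you must either show that the adversarial worst case places all $j$ starters in the $\theta_1$ phase (note that Theorem~\ref{competitiveratioindouble} actually works with the denominator $j+(\omega-j)\theta_1+(i-\omega)\theta_2$, which charges the starters against the phase-one count and is the accounting you should target), or weaken the denominator accordingly and propagate the change. Since the paper states the lemma with this denominator and offers no proof, you have faithfully reproduced its reasoning, but the step you identified as ``the subtlest part'' is exactly where a proof is still owed.
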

\begin{lemma}\label{thirdlemmaindouble}
In UL-UN, when DOA threshold-accepts $k$ sub-intervals, OPT can achieve an overall length at most $\frac{\min\{k,n-k+1+(\omega-1)\theta_1+(k-\omega)\theta_2\}}{1+(\omega-1)\theta_1+(k-\omega)\theta_2}$ times of DOA's.
\end{lemma}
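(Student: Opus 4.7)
The plan is to adapt the Case~2 argument in the proof of Theorem~\ref{FixHitchingratio}, separating DOA's accepted sub-intervals into the exploration phase (threshold $\theta_1$) and the exploitation phase (threshold $\theta_2$), and then dividing the resulting OPT bound by a suitable lower bound on $Len(DOA)$.

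First I would lower-bound $Len(DOA)$. The initial sub-interval $V_1$ is always accepted by Step~1 and contributes its full unit length $1$ of covered interval. Each of the next $\omega-1$ accepted sub-intervals clears the threshold-$\theta_1$ test and therefore contributes at least $\theta_1$ of newly covered length, and each of the remaining $k-\omega$ threshold-$\theta_2$ accepts contributes at least $\theta_2$. Summing gives
\[
Len(DOA)\;\geq\;1+(\omega-1)\theta_1+(k-\omega)\theta_2,
\]
which is precisely the denominator appearing in the claim.

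Next I would upper-bound $Len(OPT)$ in two complementary ways. Since OPT picks at most $k$ unit-length sub-intervals, trivially $Len(OPT)\leq k$. Moreover, since DOA rejects at most $n-k$ sub-intervals and each rejected unit-length sub-interval can add at most $1$ to OPT's coverage beyond what DOA already covers, we also have $Len(OPT)\leq Len(DOA)+(n-k)$ --- mirroring the missed-sub-interval accounting already used in Case~2 of Theorem~\ref{FixHitchingratio}. Combining these yields $Len(OPT)\leq \min\{k,\ Len(DOA)+(n-k)\}$.

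Finally I would form the ratio and use monotonicity. Both $k/L$ and $1+(n-k)/L$ are decreasing in $L>0$, so substituting the lower bound of $Len(DOA)$ into the denominator only weakens the ratio, giving
\[
\frac{Len(OPT)}{Len(DOA)}\;\leq\;\frac{\min\{k,\ n-k+1+(\omega-1)\theta_1+(k-\omega)\theta_2\}}{1+(\omega-1)\theta_1+(k-\omega)\theta_2},
\]
which is exactly the claim. The main conceptual obstacle is verifying that the missed-sub-interval slack of $(n-k)$ remains valid under the two-threshold policy; because the threshold in effect at the moment of rejection is irrelevant to the unit length of the rejected sub-interval itself, this slack carries over unchanged from the single-threshold analysis, so the proof reduces to careful bookkeeping.
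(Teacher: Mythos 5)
Your proposal is correct and follows essentially the same route as the paper, which justifies this lemma in the sentence preceding it: DOA's threshold-accepted sub-intervals give $Len(\Phi(\mathbb{V}_n))\geq 1+(\omega-1)\theta_1+(k-\omega)\theta_2$, OPT is capped by $\min\{k,\,Len(\Phi(\mathbb{V}_n))+(n-k)\}$ via the same missed-sub-interval accounting as Case~2 of Theorem~\ref{FixHitchingratio}, and dividing gives the stated ratio. Your write-up merely makes explicit the monotonicity step the paper leaves implicit.
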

\begin{theorem}\label{competitiveratioindouble}
In UL-UN, the competitive ratio of DOA is upper bounded by
\begin{small}
\begin{equation}\label{ratiocandidatedouble}
C=\max\{1+2\theta_1, 1+\frac{2\theta_2}{1+\frac{\omega-s}{s}\theta_1}, \frac{k}{s+1+(\omega-s-1) \theta_1}, \frac{\min\{k,n-k+q\}}{q}\}
\end{equation}\end{small}
\\where $q=1+(\omega-1)\theta_1+(k-\omega)\theta_2$, $s=\frac{k+(1-\omega)\theta_1-2\theta_2}{1+2\theta_2-\theta_1}$.
\end{theorem}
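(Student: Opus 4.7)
My plan is to split into cases based on how DOA fills its $k$ slots: either (A) all $k$ sub-intervals are accepted through the threshold-meeting condition, or (B) some $i$ with $1 \le i \le k-1$ are accepted through thresholds and the remaining $k - i$ through the quota-enough condition. These two cases are exhaustive and coincide, respectively, with the hypotheses of Lemma~\ref{thirdlemmaindouble} and Lemma~\ref{secondlemmaindouble}. Case A is immediate: Lemma~\ref{thirdlemmaindouble} gives the ratio bound $\frac{\min\{k,\, n - k + q\}}{q}$, which is exactly the fourth entry of $C$.

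For Case B I invoke Lemma~\ref{secondlemmaindouble} and subdivide along the threshold-switch index $\omega$. In sub-case B1, where $1 \le i \le \omega-1$, the bound is $\frac{\min\{k,\, j + (i - j)\theta_1 + 2j\theta_1\}}{j + (i - j)\theta_1}$. Rewriting the denominator as $j(1-\theta_1)+i\theta_1$ and dropping the $k$-cap, the ratio simplifies to $1 + \frac{2j\theta_1}{j(1-\theta_1) + i\theta_1}$, whose derivative in $j$ is positive, so the ratio is monotonically increasing in $j$. Since each disjoint interval must be initiated by a threshold-accepted sub-interval, we have $j \le i$; the worst case is $j = i$, and the expression collapses to $1 + 2\theta_1$, the first entry of $C$. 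A short calculation shows the $k$-cap does not improve on this bound in the relevant regime of $i \le \omega - 1$.

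Sub-case B2 ($\omega \le i \le k-1$) is where the real work sits. Without the $k$-cap the ratio is $1 + \frac{2j\theta_2}{j+(\omega-1)\theta_1+(i-\omega)\theta_2}$, which is increasing in $j$ and decreasing in $i$; with the $k$-cap active the ratio becomes $\frac{k}{j+(\omega-1)\theta_1+(i-\omega)\theta_2}$, which is decreasing in both $i$ and $j$. Hence the worst-case $(i,j)$ pair sits at the smallest admissible $i$ (namely $i=\omega$) together with the value of $j$ at which the uncapped branch first touches the $k$-cap. Solving this crossover condition and simplifying should yield a critical value $s$ satisfying the identity $s(1 + 2\theta_2 - \theta_1) = k + (1-\omega)\theta_1 - 2\theta_2$, i.e., $s = \frac{k+(1-\omega)\theta_1-2\theta_2}{1+2\theta_2-\theta_1}$. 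Substituting $j = s$ into the uncapped branch then yields $1 + \frac{2\theta_2}{1+\frac{\omega-s}{s}\theta_1}$, the second entry of $C$, while substituting $j = s+1$ into the capped branch (to respect the integrality of $j$, since $s$ is generally non-integer) yields $\frac{k}{s+1+(\omega-s-1)\theta_1}$, the third entry of $C$.

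Taking the maximum of the bounds from Case A and the two sub-cases of Case B produces $C$. The principal obstacle will be the optimization in B2: identifying the correct worst-case $(i,j)$ pair, verifying the monotonicity claims that let me restrict to $i = \omega$, handling the integer constraint on $j$ cleanly (so that the worst case is evaluated once on each side of the cap crossover), and finally confirming through the algebraic manipulation sketched above that these two evaluations reproduce the second and third entries of $C$ as stated.
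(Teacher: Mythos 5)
Your proposal follows essentially the same route as the paper: the same three-way case split (threshold-accepts $i\le\omega-1$, $\omega\le i\le k-1$, and $i=k$), the same invocation of Lemmas~\ref{secondlemmaindouble} and \ref{thirdlemmaindouble}, and the same identification of $s$ as the crossover value of $j$ at which the uncapped bound meets the quota cap $k$, with the uncapped branch evaluated at $j\le s$ and the capped branch at $j\ge s+1$. Your treatment of the monotonicity in $j$ and of the integrality of $j$ is slightly more explicit than the paper's, but the argument is the same.
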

\begin{proof}
Suppose, w.l.o.g., that DOA threshold-accepts $i$ sub-intervals and quota-enough accepts the other ($k-i$) sub-intervals. Let $j$ denote the number of disjoint intervals formed by the $k$ accepted sub-intervals. In the worst case scenario, the $k-i$ quota-enough accepted sub-intervals only contribute an additional length of zero to the covered length of the interval $[0,a]$ by previously threshold-accepted sub-intervals. We discuss in three cases,
\\
\textbf{Case 1.} $1\leq i\leq \omega-1$. By Lemma \ref{secondlemmaindouble}, we have the ratio 

$\rho \leq \frac{\min\{k,Len(\Phi(\mathbb{V}_n))+2  j\theta_1\}}{Len(\Phi(\mathbb{V}_n))}\leq 
\frac{\min\{k,j+(i-j)\theta_1+2  j\theta_1\}}{j+(i-j)\theta_1}\leq 1+2\theta_1$.
\\
\textbf{Case 2.} $\omega\leq i\leq k-1$. Denote $s+1$ as the minimum number of disjoint intervals formed by the accepted sub-intervals in an optimal solution. Suppose w.l.o.g., OPT achieves its maximum overall length $k$ in the worst-case scenario. We have  $s+1+(\omega-s)\theta_1+2(s+1)\theta_2 \geq k
$ and $s+(\omega-s)\theta_1+2s\theta_2 < k$ in the worst case, 
\begin{center}\label{solutionofs}
    $\frac{k+(1-\omega)\theta_1-2\theta_2}{1+2\theta_2-\theta_1}\leq s\leq \frac{k-\omega\theta_1}{1-\theta_1+2\theta_2}$
\end{center}
Further, we get $s=\frac{k+(1-\omega)\theta_1-2\theta_2}{1+2\theta_2-\theta_1}\in [1,\omega-1]$ satisfying Inequality (\ref{solutionofs}).\\
\textbf{Case 2.1.} $j\leq s$. By Lemma \ref{secondlemmaindouble},  the ratio is upper bounded by \begin{center}
$\frac{j+(\omega-j)\theta_1+(i-\omega)\theta_2+2j\theta_2}{j+(\omega-j)\theta_1+(i-\omega)\theta_2}\leq \frac{s+(\omega-s)\theta_1+2s\theta_2}{s+(\omega-s)\theta_1}=1+\frac{2\theta_2}{1+\frac{\omega-s}{s}\theta_1}
$\end{center} 
\textbf{Case 2.2.} $j\geq s+1$. By Lemma \ref{secondlemmaindouble}, the ratio is upper bounded by \begin{center}
$\frac{k}{j+(\omega-j)\theta_1+(i-j)\theta_2}\leq
\frac{k}{s+1+(\omega-s-1) \theta_1}$
\end{center} 
\textbf{Case 3.} $i=k$. By Lemma \ref{thirdlemmaindouble}, the competitive ratio of DOA is upper bounded by $\frac{\min\{k,n-k+1+(\omega-1)\theta_1+(k-\omega)\theta_2\}}{1+(\omega-1)\theta_1+(k-\omega)\theta_2}$.

By Cases 1, 2, 3, competitive ratio of DOA is upper bounded by  (\ref{ratiocandidatedouble}).
\end{proof}
To find the timing $\omega$ and the thresholds ($\theta_1,\theta_2$) that optimize the competitive ratio of DOA, we propose the following nonlinear program to minimize the maximum value (denoted by $C$) of the competitive ratio in Equation (\ref{ratiocandidatedouble}), where constraints (\romannumeral2)-(\romannumeral6) are transformed from Equation (\ref{ratiocandidatedouble}) respectively and constraints (\romannumeral7) \footnote{Constraint (\romannumeral7) actually can be  restricted, by calculation, to $\left \lceil \frac{k+1}{5} \right \rceil\leq \omega \leq k$.} and  (\romannumeral8) are naturally required.
\begin{small}
\begin{alignat}{2}\label{doublethresholdsolution}
\min\limits_{(\omega,\theta_1,\theta_2)}\quad & C & \tag{\romannumeral1} \\
\mbox{s.t.}\quad
& C\geq 1+2\theta_1, \quad& \tag{\romannumeral2}\\
& C\geq 1+\frac{2\theta_2}{1+\frac{\omega-s}{s}\theta_1} &  \tag{\romannumeral3}\\
& C\geq \frac{k}{s+1+(\omega-s-1) \theta_1}& \tag{\romannumeral4}\\
& C\geq \frac{\min\{k,n-k+1+(\omega-1)\theta_1+(k-\omega)\theta_2\}}{1+(\omega-1)\theta_1+(k-\omega)\theta_2} & \tag{\romannumeral5}\\
& s=\frac{k+(1-\omega)\theta_1-2\theta_2}{1+2\theta_2-\theta_1}& \tag{\romannumeral6}\\
& 1 \leq \omega \leq  k&\tag{\romannumeral7}\\
 &0<\theta_1<\theta_2\leq 1, & \tag{\romannumeral8}
\end{alignat}\end{small}
\begin{theorem}\label{doublehitchingcompetitveratio}
DOA with ($\omega,\theta_1,\theta_2$) returned by program (\romannumeral1)-(\romannumeral8) achieves the best worst-case performance of online algorithms with two thresholds.
\end{theorem}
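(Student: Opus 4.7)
The plan is to establish matching upper and lower bounds. The upper bound follows immediately from Theorem~\ref{competitiveratioindouble}: for any feasible triple $(\omega,\theta_1,\theta_2)$, DOA is $C$-competitive with $C$ as in Equation~(\ref{ratiocandidatedouble}). Program (i)--(viii) is exactly the problem of minimizing $C$ subject to that feasibility, so DOA with the program's optimizer $(\omega^\star,\theta_1^\star,\theta_2^\star)$ attains the best upper bound obtainable from this analysis.

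The substantive work lies in the lower bound direction: I must show that for every two-threshold algorithm parameterized by $(\omega,\theta_1,\theta_2)$ the adversary can force the competitive ratio up to $C(\omega,\theta_1,\theta_2)$; minimizing the resulting quantity over all valid parameter choices then gives exactly the program's optimum. I would construct four adversarial families, one per term inside the $\max$ defining $C$, each mirroring a case in the proof of Theorem~\ref{competitiveratioindouble}. For the term $1+2\theta_1$, release a short sub-interval in the exploration phase followed by one that overlaps it by $1-\theta_1$ and thus just fails the threshold; suitable later fillers let OPT gather length $1+2\theta_1$ while DOA retains only a unit piece. For $1+\frac{2\theta_2}{1+((\omega-s)/s)\theta_1}$, first present $\omega$ sub-intervals whose successive contributions are exactly $\theta_1$ (filling the exploration phase), then in the exploitation phase present $s$ disjoint clusters of two slightly shifted unit sub-intervals each, where DOA takes only the first of each pair but OPT takes both, gaining an extra $2\theta_2$ per cluster. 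For $\frac{k}{s+1+(\omega-s-1)\theta_1}$, drive DOA to form only $s+1$ disjoint pieces before a family of $k$ well-separated unit sub-intervals appears at the end, which OPT takes for total length $k$ but DOA has already exhausted its quota. For $\frac{\min\{k,n-k+q\}}{q}$ with $q=1+(\omega-1)\theta_1+(k-\omega)\theta_2$, release exactly $n$ sub-intervals so that DOA threshold-accepts its full quota with covered length $q$ and subsequently misses $n-k$ disjoint sub-intervals that OPT can pick up.

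Because these four families exhaust the cases distinguished in the proof of Theorem~\ref{competitiveratioindouble}, the adversary's best choice forces the ratio to be at least the maximum of the four terms, which is precisely $C(\omega,\theta_1,\theta_2)$. Since program (i)--(viii) minimizes this quantity, any two-threshold algorithm has worst-case ratio at least the program's optimum, matching the upper bound from Theorem~\ref{competitiveratioindouble} and proving the theorem.

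The main obstacle will be that the auxiliary variable $s$ defined by constraint (vi) need not be an integer, while the adversarial constructions naturally count disjoint intervals. I would address this by rounding $s$ and absorbing the resulting perturbation into a continuity argument, which is negligible compared to the $\max$ in (\ref{ratiocandidatedouble}). A secondary issue is verifying that the $n$ unit sub-intervals in each family fit inside $[0,a]$ for $a$ large enough, and that the adversary's choices remain consistent across the exploration/exploitation boundary at step $\omega$ regardless of how a generic two-threshold algorithm routes its accept/reject decisions; these checks are routine but tedious, analogous to those in the proof of Theorem~\ref{unit-lengthLB1}.
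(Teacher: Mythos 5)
The paper offers no explicit proof of this theorem: it is stated immediately after the program (i)--(viii) and is treated as a direct consequence of Theorem~\ref{competitiveratioindouble}, namely that the program by construction minimizes the upper-bound expression $C(\omega,\theta_1,\theta_2)$ of Equation~(\ref{ratiocandidatedouble}) over all admissible parameter triples, so the returned triple yields the best guarantee obtainable from that analysis. Your first paragraph reproduces exactly this argument, so on the upper-bound side you match the paper. Where you diverge is in recognizing that the literal claim ``best worst-case performance of online algorithms with two thresholds'' also needs a matching lower bound --- that for \emph{every} triple $(\omega,\theta_1,\theta_2)$ an adversary can force the ratio up to $C(\omega,\theta_1,\theta_2)$, so that minimizing $C$ really is minimizing the worst case and not merely minimizing one particular upper bound. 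The paper never establishes this tightness, so your proposal is strictly more ambitious than what the paper does.

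That said, the lower-bound half of your plan has concrete gaps. Most visibly, your construction for the term $1+\frac{2\theta_2}{1+\frac{\omega-s}{s}\theta_1}$ builds one connected block from the $\omega$ exploration-phase sub-intervals plus $s$ disjoint clusters, which yields $j=s+1$ disjoint pieces; but that term arises from Case~2.1 of Theorem~\ref{competitiveratioindouble}, which requires $j\le s$, and with $j=s+1$ you fall into Case~2.2 whose bound is a different expression. You would need an instance realizing exactly $j=s$ disjoint pieces while simultaneously letting OPT reach total length $k$ (the assumption under which $s$ is derived from constraint (vi)), and $s$ is generically non-integral, so the ``continuity argument'' you defer is doing real work rather than absorbing a negligible perturbation. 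You also need the adversary's strategy to be adaptive and well-defined against an arbitrary two-threshold algorithm (not just DOA's specific quota-enough rule), and you need to show that the $\min\{k,\cdot\}$ truncations in Lemmas~\ref{secondlemmaindouble} and~\ref{thirdlemmaindouble} are attained rather than slack. None of these is obviously fatal, but as written the tightness claim is asserted, not proved; without it your argument establishes only what the paper implicitly establishes --- optimality of the program's output with respect to the upper bound of Theorem~\ref{competitiveratioindouble}.
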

\begin{figure}
    \centering
    \includegraphics[width=6cm]{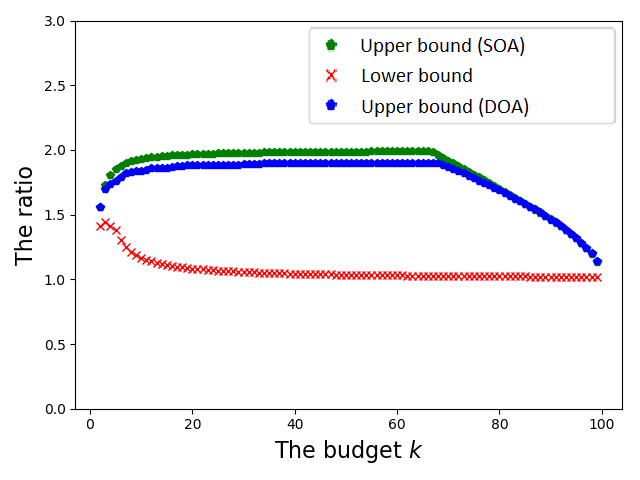}
    \caption{Performance among DOA, SOA and the lower bound in UL-UN.}
    \label{comparisonwithdoublehitching}
\end{figure}
Since the program (\romannumeral1-\romannumeral8) is nonlinear and is complicated when transformed into a linear programming,  we search its approximated solution under the UL-UN setting by giving the precision of $\theta$ as 0.01 and $n=100$. According to the searching result, we observe that the $\omega$ value should be set at around $0.8k$ and $\theta_1<\theta_2$. The double-threshold algorithm DOA improves the performance of the single-threshold based algorithm (see Figure \ref{comparisonwithdoublehitching} below). What is worth noting is that, when the ratio $\frac{k}{n}$ of the quota over the total number of online sub-intervals is relatively small (\textit{resp.} large), we find that more quota induces worse (\textit{resp}. better) performances of both SOA and DOA since OPT has more chances to gain values from those missed sub-intervals by our algorithms (\textit{resp}. since online algorithms have fewer chances to miss values from OPT). The turning point of $\frac{k}{n}$ is around $\frac{2}{3}$ in SOA since the two items of the competitive ratio in Theorem \ref{FixHitchingratio} are monotone decreasing and increasing, respectively, with regard to $k$, and meet when $\frac{k}{n}\approx \frac{2}{3}$. Interestingly, the turning point of $\frac{k}{n}$ is also around $\frac{2}{3}$ in DOA, see the example in Figure \ref{comparisonwithdoublehitching}.  We can also extend to more than two thresholds, yet the analysis will be more involved with only mild improvement. Particularly, when the thresholds in an algorithm are non-increasing as accepting sub-intervals, we have the following theorem.

\begin{theorem}\label{multithresholdstheorem}
SOA outperforms any online deterministic algorithm that accepts sub-intervals by non-increasing thresholds.
\end{theorem}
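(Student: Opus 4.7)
The plan is to show that for any deterministic online algorithm $A$ whose acceptance rule is governed by a non-increasing sequence of thresholds $\theta^A_1 \ge \theta^A_2 \ge \cdots \ge \theta^A_{k-1}$ (with $\theta^A_j$ being the threshold applied to the $j$-th post-$V_1$ acceptance attempt), one can exhibit a release sequence on which $A$'s competitive ratio is at least that of SOA. Let $\theta$ denote SOA's single threshold from Equation~(\ref{fixedthreshold}). The argument splits according to how $\theta^A_1$ compares to $\theta$.

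If $\theta^A_1 \le \theta$, monotonicity gives $\theta^A_j \le \theta$ for every $j$. I would have the adversary release the first $k$ sub-intervals so that the $j$-th one (for $j\ge 2$) contributes exactly $\theta^A_{j-1}+\varepsilon$ of new length to $A$'s running coverage. Each of these meets $A$'s current threshold but falls strictly below SOA's $\theta$, so $A$ exhausts its quota with total coverage at most $1+\sum_{j=1}^{k-1}(\theta^A_j+\varepsilon) \le 1+(k-1)\theta+(k-1)\varepsilon$, whereas SOA accepts only $V_1$ among them. The adversary then releases the remaining $n-k$ sub-intervals as fresh unit sub-intervals disjoint from all previously covered length; $A$ must reject them but both SOA and OPT can take them. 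Letting $\varepsilon\to 0$, the ratio satisfies $\rho_A \ge k/(1+(k-1)\theta)$, which matches the second term in the Case~2 upper bound in the proof of Theorem~\ref{FixHitchingratio}.

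If $\theta^A_1 > \theta$, I would re-use the Case~1 adversary from the proof of Theorem~\ref{FixHitchingratio}, in which every non-$V_1$ sub-interval contributes just above $\theta$ to the running coverage and the last sub-interval is accepted through the quota-enough rule. Because $A$ is strictly more selective than SOA at its first post-$V_1$ decision, $A$ rejects at least as many of these marginal sub-intervals as SOA does before its own threshold eventually drops below $\theta$. A simple exchange argument then shows that, whenever $A$ later accepts a sub-interval that SOA rejected, one can pair it with an earlier sub-interval rejected by $A$ but accepted by SOA that contributes at least as much marginal length, so the swap does not increase $A$'s coverage. Hence $\mathrm{Len}(A) \le \mathrm{Len}(\mathrm{SOA})$ on this instance and $\rho_A \ge \rho_{\mathrm{SOA}}$.

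The main obstacle I anticipate is the exchange argument in the second case, since $A$'s different acceptance pattern can induce a different number $j$ of disjoint intervals, and this $j$ controls the additive term $2j\theta$ in Inequality~(\ref{OPTUBCASE1fixed01}). Handling it cleanly requires the adversary to position subsequent sub-intervals so that each swap preserves (or enlarges) the disjoint-interval count; once that is verified, the remaining arithmetic is a routine substitution into the competitive-ratio expressions already derived for SOA in Theorem~\ref{FixHitchingratio}.
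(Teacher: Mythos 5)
Your overall strategy---reduce everything to the two extreme bounds $1+2\theta^A_1$ and $k/\bigl(1+\sum_j\theta^A_j\bigr)$ and exploit the fact that SOA's $\theta$ is exactly the minimizer of $\max\{1+2\theta,\;k/(1+(k-1)\theta)\}$---is the same min--max idea that drives the paper's proof (the paper gets there by enumerating all stopping configurations $(y,x)$ of the threshold algorithm and observing that the cases $y=1$ and $y=k$ dominate). Your first case is essentially the paper's Case $k$ and is sound: monotonicity gives $\sum_j\theta^A_j\le (k-1)\theta$, the exhaust-then-starve adversary forces $\rho_A\ge k/(1+(k-1)\theta)$, and this is at least SOA's bound (in the regime where SOA uses the $n$-dependent threshold you additionally need the paper's inequality $\theta\le\frac{2k-n-1}{k-1}$, i.e.\ (\ref{conditionincase21iFixratio}), but that is available).

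The gap is in your second case. First, the mechanism you describe cannot occur: in the algorithm class of Theorem \ref{multithresholdstheorem} the threshold index advances only upon \emph{acceptance}, so if $\theta^A_1>\theta$ and the adversary feeds sub-intervals whose marginal contribution to $A$ hovers just above $\theta$, $A$'s threshold never ``eventually drops below $\theta$''---it simply rejects forever. Worse, the marginal contribution of a released sub-interval is measured against the algorithm's \emph{own} accepted set, so on the Case~1 instance of Theorem \ref{FixHitchingratio} the algorithm $A$, having rejected earlier items that SOA accepted, may see later items contribute more than $\theta^A_1$ and accept them; the exchange argument you would need to control this is exactly the unresolved obstacle you flag. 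Second, even if you could establish $Len(A)\le Len(\mathrm{SOA})$ on that one instance, the conclusion $\rho_A\ge\rho_{\mathrm{SOA}}$ does not follow: pointwise domination on an instance $I$ only yields $\rho_A\ge OPT(I)/Len(\mathrm{SOA}(I))$, which is SOA's ratio \emph{on $I$}, not its worst-case guarantee, and that instance is not shown to be tight for SOA. The repair is much simpler and is what the paper's Case~1 ($y=1$) does: when $\theta^A_1>\theta$, release sub-intervals extending $V_1$ on either side by $\theta^A_1-\varepsilon$ each; $A$ rejects all of them and retains coverage $1$, while OPT covers $1+2(\theta^A_1-\varepsilon)$, giving $\rho_A\ge 1+2\theta^A_1>1+2\theta$, which equals SOA's bound at SOA's choice of $\theta$. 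With that substitution your proof closes and coincides with the paper's argument.
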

\begin{proof}
We discuss the competitive ratio of such ALG in \textit{the UL setting }in a limited time frame $T$ in this proof. First, we define the online algorithm (ALG) concerned in this theorem as follows.

~\\
\textsc{An online algorithm with non-increasing thresholds}: uses threshold $\theta_j$ for the $i$th sub-interval to be accepted, satisfying that $\theta_1\geq \theta_2\cdots\geq \theta_k$. Specifically, suppose the ALG accepts a set of $i$ sub-intervals already, then, ALG accepts a new sub-interval if and only if the new sub-interval contributes an additional length to ALG equal to or larger than $\theta_{i+1}$. The algorithm stops only when either it runs out of the quota $k$ or the time frame concerned ends.

~\\
For ease of understanding, we recall the following notations used in the proof.
\begin{itemize}
    \item $Len(\Phi(\mathbb{V}))$ denote the overall covered length of the target interval $[0,a]$ by accepted sub-intervals in ALG in the given time period $T$;
    \item $Len(OPT)$ denotes the overall covered length of the target $[0,a]$ by accepted sub-intervals in OPT in the time frame $T$;
\end{itemize}
Suppose, w.l.o.g., that ALG finally accepts $y$ sub-intervals within the given time frame $T$, formulating a number $x$ of disjoint intervals of the target $[0,a]$. Now, we discuss $k$ cases with regard to $y$, in each of which we further discuss a number $y$ of subcases with regard to $x$. 
\\\textbf{Case 1.} $y=1$. Clearly, we have $x=1$. Note that $Len(OPT)$ can not exceed over $Len(\Phi(\mathbb{V}))$ by $2\theta_2$ as otherwise ALG can accept another sub-interval by $\theta_2\leq\theta_1$, we have $\rho_{\rm case\;1.1}= \frac{1+2\theta_2}{1}=1+2\theta_2$;
\\
\textbf{Case 2.} $y=2$. We have two subcases.
\begin{itemize}
    \item\textbf{Case 2.1.} $x=1$. We have $Len(\Phi(\mathbb{V}))\geq 1+\theta_2$ and $Len(OPT)\leq 1+\theta_2+2\theta_3$ as otherwise ALG can accept the third sub-interval by threshold $\theta_3\leq \theta_2$. Hence, $\rho_{\rm case\;2.1}= \frac{Len(OPT)}{Len(\Phi(\mathbb{V}))}=\frac{1+\theta_2+2\theta_3}{1+\theta_2}=1+\frac{2\theta_3}{1+\theta_2}$; 
    \item \textbf{Case 2.2.} $x=2$. We have $Len(\Phi(\mathbb{V}))=2$ in this unit-length setting and $Len(OPT)\leq 2+4\theta_3$ as otherwise ALG can accept the third sub-interval by threshold $\theta_3\leq \theta_2$. Hence, $\rho_{\rm case\;2.2}= \frac{Len(OPT)}{Len(\Phi(\mathbb{V}))}\leq \frac{2+4\theta_3}{2}=1+2\theta_3$ which is larger than $\rho_{\rm case\;2.1}=1+\frac{2\theta_3}{1+\theta_2}$.
\end{itemize}
overall, the ratio of Case 2 is upper bounded by $\rho_{\rm case\;2.2}=1+2\theta_3<1+\theta_2$

$\;\bm{\vdots}$
\\
\textbf{Case} \bm{$[\frac{k}{2}]$}. $y=[\frac{k}{2}]$. We have $[\frac{k}{2}]$ subcases.
\begin{itemize}
\item\textbf{Case $[\frac{k}{2}]$.1.} $x=1$. We have $Len(\Phi(\mathbb{V}))\geq 1+\sum_{i=2}^{[\frac{k}{2}]}\theta_i$ and
     \begin{equation*}
         Len(OPT)\leq Len(\Phi(\mathbb{V}))+2\theta_{[\frac{k}{2}]+1}=1+\sum_{i=2}^{[\frac{k}{2}]}\theta_i+2\theta_{[\frac{k}{2}]+1}
    \end{equation*}
         in which the inequality holds as otherwise ALG can accept the ($[\frac{k}{2}]+1$)th sub-interval by threshold $\theta_{[\frac{k}{2}]+1}\leq \theta_{[\frac{k}{2}]}$. Hence, 
         \begin{equation*}
         \rho_{{\rm case\;}[\frac{k}{2}].1.}= \frac{Len(OPT)}{Len(\Phi(\mathbb{V}))}\leq 1+\frac{2\theta_{[\frac{k}{2}]+1}}{1+\sum_{i=2}^{[\frac{k}{2}]}}\theta_i
         \end{equation*}
 \item\textbf{Case $[\frac{k}{2}]$.2.} $x=2$. We have $Len(\Phi(\mathbb{V}))\geq 2+\sum_{i=3}^{[\frac{k}{2}]}\theta_i$ and
     \begin{equation*}
         Len(OPT)\leq Len(\Phi(\mathbb{V}))+4\theta_{[\frac{k}{2}]+1}=2+\sum_{i=3}^{[\frac{k}{2}]}\theta_i+4\theta_{[\frac{k}{2}]+1}
    \end{equation*}
         in which the inequality holds as otherwise ALG can accept the ($[\frac{k}{2}]+1$)th sub-interval by threshold $\theta_{[\frac{k}{2}]+1}\leq \theta_{[\frac{k}{2}]}$. Hence, 
         \begin{equation*}
         \rho_{{\rm case\;}[\frac{k}{2}].2.}= \frac{Len(OPT)}{Len(\Phi(\mathbb{V}))}\leq 1+\frac{4\theta_{[\frac{k}{2}]+1}}{2+\sum_{i=3}^{[\frac{k}{2}]}\theta_i}=1+\frac{2\theta_{[\frac{k}{2}]+1}}{1+\frac{\sum_{i=3}^{[\frac{k}{2}]}\theta_i}{2}}
         \end{equation*}
         
           \quad$\;\bm{\vdots}$
 \item\textbf{Case $[\frac{k}{2}]$.j-1.} $x=j-1$\footnote{Here, the $j$ is chosen such that $2j+2\leq k<3j$.}. We have $Len(\Phi(\mathbb{V}))\geq j-1+\sum_{i=j}^{[\frac{k}{2}]}\theta_i$ and
     \begin{equation*}
         Len(OPT)\leq Len(\Phi(\mathbb{V}))+2(j-1)\theta_{[\frac{k}{2}]+1}
    \end{equation*}
         in which the inequality holds as otherwise ALG can accept the ($[\frac{k}{2}]+1$)th sub-interval by threshold $\theta_{[\frac{k}{2}]+1}\leq \theta_{[\frac{k}{2}]}$. Hence, 
         \begin{equation*}
         \rho_{{\rm case\;}[\frac{k}{2}].j-1.}= \frac{Len(OPT)}{Len(\Phi(\mathbb{V}))}\leq 1+\frac{2(j-1)\theta_{[\frac{k}{2}]+1}}{j-1+\sum_{i=j}^{[\frac{k}{2}]}\theta_i}=1+\frac{2\theta_{[\frac{k}{2}]+1}}{1+\frac{\sum_{i=3}^{[\frac{k}{2}]}\theta_i}{j-1}}
         \end{equation*} 
         note that $\rho_{{\rm case\;}[\frac{k}{2}].j-1.}\geq \rho_{{\rm case\;}[\frac{k}{2}].1.}=1+\frac{2\theta_{[\frac{k}{2}]+1}}{1+\sum_{i=3}^{[\frac{k}{2}]}\theta_i}$.
\item\textbf{Case $[\frac{k}{2}]$.j.} $x=j$. We have $Len(\Phi(\mathbb{V}))\geq j+\sum_{i=j+1}^{[\frac{k}{2}]}\theta_i$ and
     \begin{equation*}
         Len(OPT)\leq Len(\Phi(\mathbb{V}))+(k-j)\theta_{[\frac{k}{2}]+1}
    \end{equation*}
         in which the inequality holds as OPT can exceed over ALG by at most $(k-j)\theta_{[\frac{k}{2}]+1}$ by $k<3j$ and the budget $k$. Hence, 
         \begin{equation*}
         \rho_{{\rm case\;}[\frac{k}{2}].j.}= \frac{Len(OPT)}{Len(\Phi(\mathbb{V}))}\leq 1+\frac{(k-j)\theta_{[\frac{k}{2}]+1}}{j+\sum_{i=j+1}^{[\frac{k}{2}]}\theta_i}= 1+\frac{2\theta_{[\frac{k}{2}]+1}}{\frac{2j}{k-j}+\frac{2\sum_{i=j+1}^{[\frac{k}{2}]}\theta_i}{k-j}}
         \end{equation*} 
         note that $\rho_{{\rm case\;}[\frac{k}{2}].j.}\leq \rho_{{\rm case\;}[\frac{k}{2}].1.}=1+\frac{2\theta_{[\frac{k}{2}]+1}}{1+\sum_{i=3}^{[\frac{k}{2}]}\theta_i}\leq \rho_{{\rm case\;}[\frac{k}{2}].j-1.}$, in which the first inequality holds by $k\leq 3k$.
\item\textbf{Case $[\frac{k}{2}]$.j+1.} $x=j+1$. We have $Len(\Phi(\mathbb{V}))\geq j+1+\sum_{i=j+2}^{[\frac{k}{2}]}\theta_i$ and
     \begin{equation*}
         Len(OPT)\leq Len(\Phi(\mathbb{V}))+k\theta_{[\frac{k}{2}]+1}
    \end{equation*}
         in which the inequality holds as OPT can exceed over ALG by at most $k\theta_{[\frac{k}{2}]+1}$ since $k<3j$. Hence, 
         \begin{equation*}
         \rho_{{\rm case\;}[\frac{k}{2}].j+1.}= \frac{Len(OPT)}{Len(\Phi(\mathbb{V}))}\leq 1+\frac{k\theta_{[\frac{k}{2}]+1}}{j+1+\sum_{i=j+2}^{[\frac{k}{2}]}\theta_i}
         \end{equation*} 
        note that $\rho_{{\rm case\;}[\frac{k}{2}].j+1.}\leq \rho_{{\rm case\;}[\frac{k}{2}].j.}$ as $\theta\leq1$.
        
         \quad$\;\bm{\vdots}$
\end{itemize}
 We note that the ratios subcases of this Case $[\frac{k}{2}]$ increase as the $x$ increases within $\{1,\cdots,j-1\}$ and  further decrease as the  $x$ increases within $\{j,\cdots,y\}$.  Overall, we have the ratio of this case as
 \begin{equation*}
 \begin{split}
      \rho_{\rm case [\frac{k}{2}]}&=\max\limits_{1 \leq x\leq [\frac{k}{2}] }\{\rho_{\rm case [\frac{k}{2}].x}\}=\rho_{\rm case [\frac{k}{2}].j-1}\\
      &= 1+\frac{2\theta_{[\frac{k}{2}]+1}}{1+\frac{\sum_{i=3}^{[\frac{k}{2}]}\theta_i}{j-1}}\\
      &<1+2\theta_{[\frac{k}{2}]+1}\\
      &\leq \rho_{\rm case 1}=1+2\theta_2. \end{split}
  \end{equation*}

$\;\bm{\vdots}$
\\
\textbf{Case \bm{$k$}.} $y=k$. Clearly, $Len(OPT)\leq k$ while $Len(\Phi(\mathbb{V}))\geq 1+\theta_1+\cdots+\theta_k$. Hence, we have $\rho_{\rm case \;k}\leq \frac{k}{1+\sum_{i=2}^{k}\theta_i}$

~\\
Notice the the ratios of the cases decreases as the $y$ increase till $y=k-1$. Therefore, the competitive ratio of ALG is upper bounded by 
\begin{align*}
&\max\{\frac{k}{1+\sum_{i=2}^{k}\theta_i}, 1+2\theta_2\}&\\
&\geq \max\{\frac{k}{1+(k-1)\theta_2}, 1+2\theta_2\}&\\
&\geq \frac{\sqrt{9 k^2-14k+9}-k-1}{2(k-1)}+1& {\rm ratio \;of \;SOA\;in\;
AN}\\
&\geq \min\{\frac{\sqrt{1+2(k-1)(n-k)}-1}{k-1}+1,\frac{\sqrt{9 k^2-14k+9}-k-1}{2(k-1)}+1\}&{\rm ratio \;of \;SOA\;in \;UN}\\
\end{align*}
in which the first inequality holds by $\theta_1\geq \theta_2\cdots\geq \theta_k$, and the second inequality holds by $\frac{\sqrt{9 k^2-14k+9}-k-1}{4(k-1)}=\min\limits_{\theta_2}\max\{\frac{k}{1+(k-1)\theta_2}, 1+2\theta_2\}$.

The proof completes.
\end{proof}

\section{Concluding Remarks}
This paper studies the online maximum $k$-coverage problem on a line without preemption. With regard to the length of each sub-interval and the number of totally released sub-intervals, we comprehensively consider different settings in this paper. Our contribution is three-fold. 

\textbf{First}, we present lower bounds on the competitive ratio for the settings respectively. \textbf{Second}, we propose an optimal solution for the offline problem where the sequence of offline sub-intervals is given to the decision-maker at the very beginning. \textbf{Third}, we present two online algorithms, including a single-threshold-based algorithm SOA and a double-threshold-based algorithm (DOA). DOA uses its first threshold (which is usually set below 0.5) for exploration in accepting the first $\left [0.8 k\right ]$ released sub-intervals and its second threshold  (which is set larger than the first threshold) for exploitation in accepting the last $k-\left [0.8 k\right ]$ sub-intervals. We prove that SOA achieves competitive ratios close to the lower bounds, respectively, and DOA, with its parameters computed by our proposed program, improves the performance of SOA slightly. In addition, we show that any online deterministic algorithm that accepts sub-intervals by non-increasing thresholds, cannot achieve a competitive ratio better than SOA no matter how many thresholds  the algorithm uses.

For the future work, we may consider the case that different sub-intervals are associated with different costs instead of unit costs in this paper, considering that candidates may call for different payments in crowding-sourcing activities.

\textbf{Acknowledgements.} This work was done when Songhua Li was visiting the Singapore University of Technology and Design. Minming Li is also from City University of Hong Kong Shenzhen Research Institute, Shenzhen, P.R. China. The work described in this paper was partially supported by Project 11771365 supported by NSFC.

\newpage
\appendix

\section{Dynamic Programming Solution to The Unit-length Setting} \label{app_DP_unit}
Both the unit-length case and the flexible-length case can be regarded as a special case of the arbitrary-length case. Hence, our dynamic programming-based offline solution can be easily applied to the other two cases. Here, we take the unit-length case as an example.

To solve the offline problem, we first sort offline sub-intervals in $\mathbb{V}$ in non-decreasing order of their end locations, which runs in $O(kn+n\log n)$ time in the worst case. In the following solution, we abuse notation to denote $\{V_1,V_2,...,V_n\}$ as the sub-interval set we got after sorting, i.e., $d_1\leq d_2 \leq \cdots \leq d_n$, and hence their start locations satisfy $o_1\leq o_2 \leq \cdots \leq o_n$. Suppose, without loss of generality, that the optimal offline solution (i.e., OPT) accepts sub-intervals in $\mathbb{V}_n$ in decreasing order of the subscripts. Note that the overall length of OPT would not decrease when replacing the right-most sub-interval by $V_n$, implying the following Observations 1 \& 2. 
\\
\textbf{Observation 1}.
In an optimal offline solution, $V_n$ is accepted.
\\
\textbf{Observation 2}.
Once $V_i\in \mathbb{V}$ is accepted by OPT, one of the following two sub-intervals is accepted by OPT.
\begin{equation}\label{unitcandidateone}
    V_{\lambda(i)}=\mathop{\arg\max}\limits_{\{V_j\in \mathbb{V}|d_i-1\leq d_j\leq d_i\}}\{|d_i-d_j|\}
\end{equation}
or 
\begin{equation}\label{unitcandidatetwo}
 V_{\mu (i)}=\mathop{\arg\min}\limits_{\{V_j\in \mathbb{V}|d_i-d_j\geq 1\}}\{|d_i-d_j|\} 
\end{equation} 
in which $\lambda(i)$ and $\mu(i)$ indicate the corresponding subscripts of the sub-intervals in $\mathbb{V}$ respectively.

To see why, we give an example in Figure \ref{DPobservations}, in which $V_{\lambda(i)}=V_{i-2}$ since $V_{i-2}$ contributes an additional length to OPT more than $V_{i-1}$ does, and $ V_{\mu (i)}=V_{i-3}$ since $V_{i-3}$ contributes an additional length to OPT no less than $V_{i-4}$ or $V_{i-5}$ does.
\begin{figure}
    \centering
    \includegraphics[width=8cm]{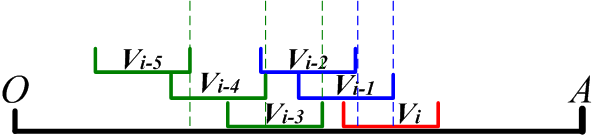}
    \caption{An explanation of Observation 2}
    \label{DPobservations}
\end{figure}

\textbf{Dynamic programming approach.} Given the offline sub-intervals in $\mathbb{V}_n$, the optimal solution $\chi (\mathbb{V}_n,k)$ can be obtained via the following equations by Observations 1-2: when both $ V_{\lambda(n)}$ and $ V_{\mu (n)}$ exist, we have Equation (\ref{baseequationone}); when $ V_{\mu (n)}$ exists but $V_{\lambda(n)}$ does not, we have Equation (\ref{baseequationotwo}); when $V_{\lambda(n)}$ exists but $ V_{\mu (n)}$ does not, we have Equation (\ref{baseequationothree}); when neither $V_{\lambda(n)}$ nor $ V_{\mu (n)}$ exists, we have Equation (\ref{baseequationofour}). For base cases, we have Equation (\ref{base01}) when $k=0$ for arbitrary set $\mathbb{V}_i$, and Equation (\ref{base02}) when $k=1$ for arbitrary set $\mathbb{V}_i$. 

Since our dynamic programming approach  generates  $O(kn)$ different intermediate states in which each state is calculated in $O(1)$ time, our offline optimal solution totally runs in $O(kn+n\log n)$ time, including the preliminary sorting step. We have
\begin{equation}\label{baseequationone}
  \chi (\mathbb{V}_n,k)=\max\{\chi (\mathbb{V}_{\mu (n)},k-1)+1,\chi (\mathbb{V}_{\lambda(n)},k-1)+1-\Lambda(V_n,V_{\lambda (n)})\}   
\end{equation}
\begin{equation}\label{baseequationotwo}
   \chi (\mathbb{V}_n,k)=\chi (\mathbb{V}_{\mu (n)},k-1)+1,
\end{equation}
\begin{equation}\label{baseequationothree}
   \chi (\mathbb{V}_n,k)=\chi (\mathbb{V}_{\lambda(n)},k-1)+1-\Lambda(V_n,V_{\lambda (n)})
\end{equation}
\begin{equation}\label{baseequationofour}
   \chi (\mathbb{V}_n,k)=1
\end{equation}
 \begin{equation}\label{base01}
    \chi (\mathbb{V}_i,1)=1
\end{equation}
\begin{equation}\label{base02}
    \chi (\mathbb{V}_i,0)=0
\end{equation}

\end{document}